\definecolor{navy}{RGB}{0,0,128}
\definecolor{dodgerblue}{RGB}{30,144,255}
\newcommand{\relu}{\text{ReLU}\xspace{}}
\newcommand{\mysubsection}[1]{\medskip\noindent\textbf{#1}}
\newcommand{\sat}{\texttt{SAT}}
\newcommand{\unsat}{\texttt{UNSAT}}
\newcommand{\rn}[1]{\mathbb{R}^{#1}}
\newcommand{\theory}{$\mathcal{T}$}
\newcommand{\conf}{$\mathcal{C}$}
\tikzstyle{every pin edge}=[<-,shorten <=1pt]
\tikzstyle{neuron}=[circle,fill=black!25,minimum size=17pt,inner sep=0pt]
\tikzstyle{input neuron}=[neuron, fill=green!50]
\tikzstyle{output neuron}=[neuron, fill=red!50]
\tikzstyle{hidden neuron}=[neuron, fill=blue!50]
\tikzstyle{merged neuron}=[neuron, fill=orange!50]
\tikzstyle{annot} = [text width=6em, text centered]
\tikzstyle{nnedge} = [-{stealth},shorten >=0.1cm, shorten <=0.05cm,line width=0.8pt,black]
\tikzstyle{proofNode} = [rounded rectangle, fill=red!30]
\tikzstyle{lemmaNode} = [rounded rectangle, fill=dodgerblue!30]
\tikzstyle{proofEdge} = [-{stealth},shorten >=0.1cm, shorten <=0.05cm,line width=0.8pt,black]
\newcommand{\writer}{AletheProofWriter}
\newcommand{\picid}{\textsc{Picid}}
\title{\picid: Proof-Driven Clause Learning in Neural Network Verification}
\author{
	Omri Isac\inst{1}\thanks{Both authors contributed equally}, Idan Refaeli\inst{1}$^\star$, Haoze Wu\inst{2}, Clark Barrett\inst{3} and Guy Katz\inst{1}
}
\institute{ 
	The Hebrew University of Jerusalem \and Amherst College \and Stanford University
}
\begin{document}
	
	\maketitle
	\vspace{-0.4cm}
	\begin{abstract}
		Current Deep Neural Network (DNN) verifiers are typically designed to prioritize scalability over reliability. Reliability can be reinforced through the generation of \emph{proofs} that are checkable by trusted, external proof checkers. To date, only a handful of verifiers support proof production; and these rely on verifier-specific formats, and balance between scalability, proof detail, and the trustworthiness of their proof checker.		
		In this tool paper, we introduce \picid{}, a DNN verifier that produces proofs in the standard Alethe format for SMT solving, checkable by multiple existing checkers. \picid{} implements a parallel CDCL(\theory) architecture that integrates a state-of-the-art, proof-producing SAT solver with the Marabou DNN verifier. Furthermore, \picid{} leverages \unsat{} proofs to derive conflict clauses.
		Our evaluation shows that \picid{} generates valid proofs in the vast majority of cases and significantly outperforms existing tools that produce comparable proofs.
	\end{abstract}

	\section{Introduction}
	\label{sec:Introduction}
	Deep Neural Networks (DNNs) have achieved remarkable success across many computational tasks, yet their internal structure and parameters are largely unintelligible, which undermines trust in their decisions.
	To address this problem, the formal verification community has developed multiple algorithms for \emph{DNN verification}, which can provably determine whether a DNN complies with given specifications~\cite{BrMuBaJoLi23,AkKeLoPi19,AvBlChHeKoPr19,BaShShMeSa19,GeMiDrTsCHVe18,HuKwWaWu17,GiHenLech20,LyKoKoWoLiDa20,PuTa10,SaDuMo19,GaGePuVe19,TrBaXiJo20,WaPeWhYaJa18,ZhShGuGuLeNa20,DuNgDw24,SchFoGu22, KaLaBaBrDuFlJoKoMaNgWu25,BrMuBaJoLi23, Ba21, KhNeRoXiBeBoFiHaLeYe23, XiKrNe22,LeLeGa24}.

	DNN verifiers are typically designed with an emphasis on performance
	and scalability~\cite{KaLaBaBrDuFlJoKoMaNgWu25}, which is essential
	for handling real-world
	DNNs~\cite{ElElIsDuGaPoBoCoKa24,AmFrKaMaRe23,KaLaBaBrDuFlJoKoMaNgWu25}. However,
	the \emph{reliability} of these verifiers has received lesser
	attention. In practice, many DNN verifiers rely on numerically
	unstable floating-point arithmetic for internal computations, which
	could compromise their soundness~\cite{JiRi21,ZoBaCsIsJe21} ---
	thereby undermining the reliability of the verification process as a
	whole.
	
	The mainstream approach for addressing this challenge is through
	\emph{proof production}: the generation of verifiable, mathematical
	artifacts that accompany verifier results. This is common practice,
	e.g., in the SAT and SMT communities~\cite{BaDeFo15,HeBi15}.  Proof
	artifacts can be checked by a reliable proof checker that either
	certifies the verification result, or exposes errors in the
	verification process. To date, only a handful of DNN verifiers support
	proof production~\cite{IsBaZhKa22,DuNg25}, and these suffer from several limitations:
	\begin{inparaenum}[(i)]
		\item
		The produced proofs have varying levels of
		detail: some of them are fairly high-level, omitting many details and
		placing a computational burden on the
		checker;
		\item
		The proofs are often written in solver-specific formats, making it
		difficult, e.g., to apply multiple checkers or to compare proofs
		produced by different tools;
		and
		\item 
		The  proof checkers used are often ad hoc tools, which may
		themselves contain errors.
	\end{inparaenum}
	To the best of our knowledge, only a single proof checker has been
	formally verified~\cite{DeIsKoStPaKa25,DeIsPaStKoKa23}; however, this
	checker lacks support for basic optimizations applied by DNN verifiers,
	thus limiting the scalability of proof-producing verifiers to
	small-scale DNNs.
	
	Here, we present a new tool that seeks to begin addressing these
	limitations. Our goal is to improve the scalability of \emph{reliable}
	DNN verifiers, namely DNN verifiers that produce proofs in a standard
	format, checkable by formally verified proof checkers. To this end, we introduce \emph{\picid{}} (\emph{Proof-driven Identification of Clauses In DNN verification}), which builds on the proof-producing DNN verifier
	Marabou~\cite{WuIsZeTaDaKoReAmJuBaHuLaWuZhKoKaBa24,IsBaZhKa22}. \picid{} invokes Marabou as a theory-solver (\theory-solver), integrating it with the CaDiCaL SAT
	solver~\cite{BiFaFaFlFrPo24} within a CDCL(\theory) architecture. For
	\unsat{} queries, it reduces both Marabou proofs and CaDiCaL's
	LRAT proofs~\cite{CrLuHeHuWaKaSc17,PoFlBi23} into the standard Alethe
	format for SMT proofs~\cite{ScFlBaFo21,BaFlFoSc}, and merges them to a complete proof of unsatisfiability.  In
	particular, \picid{} introduces the following novel features:
	\begin{enumerate}
		\item A CDCL(\theory) architecture for a DNN verifier, employing the
		state-of-the-art SAT solver CaDiCaL~\cite{BiFaFaFlFrPo24}, via the
		standard IPASIR-UP interface~\cite{FaNiPrKiSzBi23,FaNiPrKiSzBi24}.
		Conflict clauses are derived based on \unsat{} proofs, using a
		variant of the proof minimization algorithm presented
		in~\cite{IsReWuBaKa26}.
		\item Production of \unsat{} proofs in the standard Alethe format for
		SMT solvers, readily checkable by powerful and
		verified checkers~\cite{AnLaBa23,LaFlBaJaAnReScBaTi25}.
		\item A \emph{Split and Conquer} (SNC) parallelization
		technique~\cite{WuOzZeJuIrGoFoKaPaBa20}, which produces proofs for
		each worker thread, and then combines them into an overall proof for
		the query at hand.
		
	\end{enumerate}
	In a comparison that we conducted, \picid{} significantly outperformed an existing DNN verifier
	and an existing SMT solver that produce Alethe proofs. Further,
         over $99.9\%$ of the proofs generated by \picid{} successfully passed their checks. For the single query for which \picid{} returned an unsound result, the proof mechanism indicated the possible steps in the verification process from which the error could originate.
	By that, \picid{} pushes forward the
	scalability of DNN verifiers, while affording the highest levels of reliability.
	
	The  paper is organized as follows: \Cref{sec:overview} includes an overview description of \picid{}. Then, we explain its CDCL(\theory), proofs, and parallelization modules in Sections~\ref{sec:dpllt},~\ref{sec:proofs} and~\ref{sec:snc}, respectively.~\Cref{sec:Evaluation} describes our experimental evaluation, and we conclude and discuss related and future work in~\Cref{sec:conclusion}.

	\section{Overview}
	\label{sec:overview}
	Here, we outline the main novel components of \picid{} and
	their interactions, as illustrated in Fig.~\ref{fig:overview}. In the
	subsequent sections we describe each of the components more thoroughly. Note that \picid{} significantly extends the native Marabou tool; we use ``Marabou'' to refer to the existing baseline, and use \theory-solver to refer to Marabou when used (and extended) within the context of \picid{}.
	\vspace{-0.7cm}
	\begin{figure}[ht!]
		\centering
		\scalebox{0.8}{
		\begin{tikzpicture}[
			font=\sffamily,
			stage/.style={
				draw=black!85,
				fill=red!25,
				rounded corners=8pt,
				line width=1.0pt,
				minimum height=1.6cm,
				align=center
			},
			corebox/.style={
				draw=black!85,
				fill=black!5,
				rounded corners,
				line width=1.2pt,
				align=center
			},
			module/.style={
				draw=black!85,
				fill=cyan!30,
				rounded corners,
				line width=1.2pt,
				align=center
			}
			]


			\node[corebox, minimum width=0.6\linewidth, minimum height=0.35\linewidth] (core) {};
			
			\node[stage, minimum width=0.15\linewidth,left=0.6cm of core] (pre) {Preprocess\\(Marabou)};
			
			\node[stage, minimum width=0.15\linewidth, right=0.6cm of core] (post)
			{Postprocess\\(ProofWriter)};
			
			\path (core.north west) ++(1.6cm,-0.8cm)
			node[module, minimum width=0.15\linewidth] (cdcl)  {CdclCore \\ Marabou \theory-solver};

			\path (core.north east) ++(-1.2cm,-0.8cm)
			node[module, minimum width=0.15\linewidth] (sat)  {SAT Solver \\ (CaDiCaL)};

			\path (core.center) ++(0,-1.6cm)
			node[module, minimum width=0.15\linewidth] (proof)  {\writer};
			
			\draw[dashed, <->, line width=1.5pt]  (cdcl) -- (sat) node[midway,above] {\textbf{IPASIR-UP}} ;
			
			\draw[dashed, ->, line width=1.5pt]  (cdcl) -- (proof);
			\draw[dashed, ->, line width=1.5pt]  (sat) -- (proof);
			\draw[dashed, ->, line width=1pt]  (pre) -- (core);
			\draw[dashed, ->, line width=1pt]  (core) -- (post);
			
			\path (core.north east) ++(1.2cm,-0.4cm) node[draw, fill=white, dashed] {$\times N$ Workers};
		\end{tikzpicture}
	}
		\vspace{-0.3cm}
		\caption{An overview of \picid's architecture.}
		\label{fig:overview}
	\end{figure}
	\vspace{-0.8cm}

	\sloppy
	Given an input query, \picid{} first performs a preprocessing pass, using
	Marabou modules. It then initializes an instance of the novel \emph{CdclCore} module, which is implemented within the \theory-solver; and which 
	is responsible for orchestrating its interaction with the SAT
	solver. To
	this end, CdclCore implements the IPASIR-UP interface, enabling
	modular integration of the SAT and the \theory- solvers.
	Adhering to the principles of DPLL(\theory), the SAT solver then begins to search for a satisfying Boolean assignment to its variables, or conclude that none exists. During this search, the SAT solver either \emph{notifies} the \theory-solver on changes made in the Boolean level, or \emph{calls back} the \theory-solver for information.
	Upon callbacks and notifications
	from the SAT solver (e.g., \theory-propagate), CdclCore invokes the
	corresponding routines in the \theory-solver and relays the resulting information
	back to the SAT solver (e.g. assignment to Boolean
	variables).
              Proof generation is handled by the novel \emph{\writer} module, which collects information from both the SAT solver and the proof-producing \theory-solver and emits the corresponding proof steps.
              
	When multiple workers are used in parallel, each worker is initialized with its own CdclCore, \writer, \theory- and SAT solvers. Notably, all workers are without additional preprocessing, ensuring a consistent Boolean abstraction across workers. During verification, all workers write to a shared proof file. If the query is \unsat{} (i.e., all workers terminate with \unsat{}), a designated worker additionally emits proof steps that derive the \unsat{} result of the original query from the \unsat{} results of the subqueries.
	
	Currently, \picid{} can verify DNNs with the \relu{} activation function. Extending it to any piecewise linear activation function is straightforward, using an appropriate Boolean abstraction. We leave this for future work. 
	\section{CDCL(\theory) Implementation}
	\label{sec:dpllt}
	
	\subsection{DNN Verifiers as \theory-Solvers}
	\label{sec:abstraction}
	The DNN verification problem can be formulated as an SMT problem with a single theory solver of linear real arithmetic~\cite{KaBaDiJuKo21}.
	Separating the Boolean- and theory-specific reasoning in
	this way, as is generally done in SMT solving, is beneficial for
	multiple reasons:
	\begin{inparaenum}[(i)]
		\item It provides a clean logical separation of these two distinct levels of reasoning;
		\item It enables a modular integration of modern SAT solvers into DNN
		verifiers, allowing the latter to benefit from advances in SAT
		solving;
		\item If often allows for more expedient solving times; and
		\item It paves the way for future integration with additional theory solvers.
	\end{inparaenum}
	Despite these facts, most modern DNN solvers do not follow this
	architecture. Instead, they either mix the two layers completely, or
	separate them entirely, so that very little information flows between
	the two.
	
	Here, we advocate leveraging the experience gained by
	the SMT community in recent decades, by designing a DNN verifier
	according to the DPLL(\theory) architecture.  Such a design has been attempted before~\cite{KaBaDiJuKo21,DuNgDw24,LiYaZhHu24,Eh17}; and we indeed follow prior work and employ Boolean
	reasoning over the neurons of the DNN --- using Booleans to encode the linear phases of
	activation functions. Consequently, unlike general-purpose SMT
	solvers, only a subset of the theory constraints is represented using
	Boolean variables, while the remaining constraints are handled
	internally by the \theory-solver.
	
	Building on this foundation, we take the next step and design \picid{} as a CDCL(\theory)-based DNN verifier. Conflict-Driven Clause Learning (CDCL)~\cite{SiSk96,SiSk99,BaSc97,ZhMaMoMa01} extends DPLL by augmenting the formula with learned conflict clauses. Each clause corresponds to the negation of a partial assignment that led to unsatisfiability, and is therefore implied by the original formula. These clauses prevent the solver from revisiting subspaces of the search space that are guaranteed to contain no satisfying assignments. In SAT solving, CDCL is known to significantly prune the search tree and substantially improve scalability~\cite{SiSk96,SiSk99,BaSc97}. SMT solvers similarly rely on CDCL(\theory), the first-order extension of CDCL~\cite{BaTi18}. Prior work in DNN verification has proposed various techniques for detecting non-trivial conflict clauses~\cite{Eh17,ZhBrHaZh24}. However, these attempts either involved repetitive invocations of an LP solver~\cite{Eh17}, or they were not implemented within a DPLL architecture~\cite{ZhBrHaZh24}.
	To the best of our knowledge, our work is the first to focus on clause learning derived directly from \unsat{} proofs. More importantly, our conflict clauses are learned coupled with their corresponding  proof.

	\subsection{Proof-Driven Clause Learning}
	\label{sec:pdcl}
	We next describe how \unsat{} proofs can be used to derive conflict clauses. Conceptually, whenever Marabou determines that a subspace is \unsat{}, the corresponding proof of unsatisfiability can be analyzed to extract a core set of input and decision constraints sufficient to imply \unsat{}. Since a subset of these constraints is abstracted by the SAT solver, it can be directly translated into a conflict clause. This approach has two key advantages: \begin{inparaenum}[(i)]
		\item It improves performance speed by learning non-trivial  conflict clauses; and
		\item Each conflict clause is accompanied by a minimized proof of its correctness.
	\end{inparaenum}
	For completeness, we include in Appendix~\ref{app:pdclalg} a modified version of the proof minimization algorithm for Marabou proofs~\cite{IsReWuBaKa26}, used to derive proof-driven conflict clauses.
	
	\subsection{IPASIR-UP Implementation}  
	A key motivation behind our design is to allow the modular 
	integration of arbitrary SAT solvers into \picid{}. To maximize modularity, we use the IPASIR-UP
	interface~\cite{FaNiPrKiSzBi23,FaNiPrKiSzBi24}, which provides a
	structured way to connect \theory-solvers with SAT solvers within a
	DPLL(\theory) framework. Though some SAT solvers already support this
	interface (e.g., CaDiCaL~\cite{BiFaFaFlFrPo24}), to the best of our
	knowledge, no existing DNN verifier has implemented it thus far.
	We summarize here the key methods implemented in the CdclCore of \picid, required 
	to support IPASIR-UP.

	\sloppy
	The IPASIR-UP interface includes several methods, some of which serve
	as notifications from the SAT solver to the \theory-solver, informing
	it of key events during the SAT-level search process. These include
	notifications for assigned literals (\texttt{notify\textunderscore
		assignment()}), backtracking to earlier decision levels
	(\texttt{notify\textunderscore backtrack()}), and the introduction of
	new decision levels (\texttt{notify\textunderscore new\textunderscore
		decision\textunderscore level()}). Separate methods serve as
	callbacks, allowing the SAT solver to query the \theory-solver for
	relevant information. These include retrieving literals for
	propagation (\texttt{cb\textunderscore propagate()}), selecting
	decision variables (\texttt{cb\textunderscore decide()}), and
	providing explanatory clauses for previous propagations
	(\texttt{notify\textunderscore add\textunderscore
		reason\textunderscore clause\textunderscore lit()}) or
	conflicts (\texttt{notify\textunderscore add\textunderscore
		external\textunderscore clause\textunderscore lit()}).
	
	For \picid{} to be compatible with IPASIR-UP and
	support these methods, we maintain several data structures,
	including context-dependent ones (inspired by those used in
	Marabou~\cite{WuIsZeTaDaKoReAmJuBaHuLaWuZhKoKaBa24} and
	CVC5~\cite{BaBaBrKrLaMaMoMoNiNoOzPrReShTiZo22}). These structures
	ensure that whenever the SAT solver backtracks to an earlier level in
	the search tree, the data within \picid{} reverts to the
	state it held at that level. Due to length restrictions, we refer the
	reader to Appendix~\ref{app:functionImplementation} for a detailed
	discussion of the specific data structures used and an in-depth
	explanation of each method.
	\label{sec:IPASIRUP}

	\section{Proof Production in Alethe Format}
	\label{sec:proofs}

We designed \picid{} to produce proofs in the Alethe format~\cite{BaFlFoSc,ScFlBaFo21} for various reasons: \begin{inparaenum}[(i)]
		\item It is a standard proof format for SMT solving, used by existing tools~\cite{BaBaBrKrLaMaMoMoNiNoOzPrReShTiZo22}. As such, it naturally lends itself to expressing proofs that follow the DPLL(\theory) scheme, integrating proofs steps of a SAT solver with those of a \theory-solver;
		\item It is readily supported by multiple proof checkers,
		enhancing reliability and scalability~\cite{LaFlBaJaAnReScBaTi25,AnLaBa23}; and
		\item Existing Marabou proofs can be expressed naturally  in this format.
	\end{inparaenum}
	
	The module responsible for proof generation is \emph{\writer}, which receives information from both the SAT and the \theory-solvers, and writes corresponding proof steps.
	The \writer{} is implemented within the \theory-solver, and it implements the extension of IPASIR-UP for proof production.
	It receives two types of notifications from the SAT solver, corresponding to proof steps, and then uses the notified information and  data structures of the \theory-solver to write proofs steps in Alethe.
	The first, $add \_ derived \_ clause$, informs \writer{} that a Boolean resolution proof step is learned by the SAT solver. These steps are given in the LRAT proof format, and can be directly translated into corresponding Boolean resolution steps in Alethe.
	The second, $add \_ original \_ clause$, informs \writer{} whenever conflict and reason clauses are learned by the \theory-solver. \writer{} converts the proofs of the given clause, and all \theory-lemmas required for proving that clause, into Alethe format. In particular, proofs of \theory-lemmas are written lazily~\cite{KaBaTiReHa16}, only after \picid{}'s analysis (as in~\cite{IsReWuBaKa26}) identifies them as a part ot the minimized proof.

	\subsection{Expressing Marabou Proofs in Alethe}
	Here, we discuss the details of converting Marabou proofs into the Alethe format.
	Marabou proofs~\cite{IsBaZhKa22} include three main components: \begin{inparaenum}[(i)]
		\item Proofs of \unsat{} for each subspace of the search tree;
		\item Proofs of bound propagation based on the \relu{} function, namely \theory-lemmas; and
		\item Proof of soundness of case splitting, based on phases of the activation functions.
	\end{inparaenum}
	
	Since the third component is abstracted into Boolean predicates and case splitting is handled by the SAT solver, the correctness of case splitting in \picid{} proofs is ultimately established within LRAT part of the proof. Consequently, we focus on translating the first two components.
	
	\mysubsection{Proofs by the Farkas Lemma.} Marabou's proof mechanism relies on a variant of the Farkas lemma~\cite{IsBaZhKa22,Va98}, which applies to linear satisfiability problems defined by a matrix $A$ together with lower and upper bounds $l$ and $u$:
	
	\begin{theorem}[Farkas Lemma Variant (From~\cite{IsBaZhKa22})]
		\label{thm:Farkas}
		Consider the query $ A\cdot V = \bar{0} $ and $ l \leq V \leq u $, for $ A \in M_{m \times n} (\mathbb{R})$  and $ l,V,u \in \rn{n} $. There is no satisfying assignment to $V$ if and only if $ \exists w \in \rn{m} $ such that for 
		$w^\intercal \cdot A \cdot V \coloneq \underset{i=1}{\overset{n}{\sum}}c_i\cdot x_i$, we have that  $ \underset{c_i > 0}{\sum}c_i\cdot u_i +  \underset{c_i < 0}{\sum}c_i\cdot l_i < 0$
		whereas $  w^\intercal \cdot \bar{0} = 0 $. Thus,
		$w$ is a proof of \unsat{}, and can be constructed during LP solving.
	\end{theorem}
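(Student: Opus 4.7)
The plan is to prove the two implications separately. Writing $c^\intercal \coloneq w^\intercal A$, the algebraic identity $w^\intercal A V = c^\intercal V = \sum_{i=1}^{n} c_i \cdot x_i$ is the object I would rely on throughout, and the Farkas condition reads $\max_{l \leq V \leq u} c^\intercal V < 0$.

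For the ($\Leftarrow$) direction (soundness), I would do a short direct calculation. Suppose some $V$ satisfies the query; then $A V = \bar{0}$ forces $w^\intercal A V = 0$. On the other hand, $c^\intercal V$ is maximized over the box $[l,u]$ coordinatewise: take $x_i = u_i$ when $c_i > 0$ and $x_i = l_i$ when $c_i < 0$, with indices where $c_i = 0$ contributing nothing. This gives the pointwise bound $c^\intercal V \leq \sum_{c_i > 0} c_i u_i + \sum_{c_i < 0} c_i l_i < 0$, contradicting $w^\intercal A V = 0$. The equality $w^\intercal \bar{0} = 0$ in the statement is automatic and serves only to emphasize that $w$ need not be entrywise nonnegative.

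For the ($\Rightarrow$) direction (completeness), I would invoke a hyperplane separation argument on the image set $S \coloneq \{A V : l \leq V \leq u\} \subseteq \mathbb{R}^m$. Because $[l,u]$ is a compact convex box and $A$ is linear, $S$ is compact and convex. Infeasibility of the query says precisely $\bar{0} \notin S$, so the strict separating hyperplane theorem yields $w \in \mathbb{R}^m$ and $\alpha < 0$ with $w^\intercal y \leq \alpha$ for all $y \in S$. Specializing to $y = A V$ and taking the maximum over the box gives $\max_{l \leq V \leq u} c^\intercal V \leq \alpha < 0$, which by the coordinatewise characterization equals the sign-indexed sum in the statement. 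An equivalent route is to encode $l \leq V \leq u$ and $A V = \bar{0}$ as a system of linear inequalities and apply the classical Farkas lemma of alternatives; after merging the multipliers for $A V \leq \bar{0}$ and $-A V \leq \bar{0}$ into a single $w$ of unrestricted sign, the same bookkeeping recovers the stated formula.

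The principal obstacle is the completeness direction: soundness is a line of arithmetic, but completeness rests on a classical theorem of alternatives whose proof (via hyperplane separation or LP duality) is the substantive ingredient. A minor care point is the sign case split on the coordinates of $c$: once the coordinatewise maximizer of $c^\intercal V$ on $[l,u]$ is recorded, the sign-indexed sum in the statement is exactly its value, so no further algebraic work is needed. The proof is also constructive in practice, since $w$ can be read off as the Farkas certificate produced directly by simplex-based LP solving, consistent with the final remark of the theorem.
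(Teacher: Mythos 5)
Your proof is correct. Note that the paper itself does not prove \Cref{thm:Farkas} --- it is imported verbatim from the cited Marabou proof-production work --- so there is no in-paper argument to compare against; your two-directional treatment (a one-line arithmetic contradiction for soundness, and strict hyperplane separation of $\bar{0}$ from the compact convex set $\{AV : l \leq V \leq u\}$ for completeness, with the coordinatewise maximizer giving exactly the sign-indexed sum) is the standard and correct route. The only implicit hypothesis worth recording is that the box $[l,u]$ is nonempty, since otherwise $S$ is empty, the separation argument does not apply, and indeed no certificate $w$ of the stated form need exist; in practice this is handled by a separate bound-consistency check before the LP is solved.
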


	Alethe supports another variant of the Farkas lemma, captured in the linear arithmetic proof rule $\texttt{la\_genereic}$, formalized as follows:
	\[ 
	\texttt{(step i (cl }
	\varphi_1 \cdots \varphi_n\texttt{ ) :rule la\_generic :args(} a_1 \cdots a_n))
	\]
	where $\varphi_i$ are the linear arithmetic literals of the derived clause, and \texttt{args} are the Farkas coefficients used to derive \unsat{} of the negated clause. Notably, every literal in the clause is coupled with a coefficient. In Marabou, however, $w$ describes only the coefficients of rows of $A$ and not of $l,u$.
	
	To translate a proof vector $w$ into the Alethe format, we first compute the linear combination $\sum_{i=1}^{n} c_i \cdot x_i$. We then construct the corresponding clause by negating: \begin{inparaenum}[(i)]
		\item The rows $A_j$ for which the associated coefficient $w_j \neq 0$;
		\item Upper-bound constraints $u_k-x_k \geq 0$ with positive coefficients $c_k$; and
		\item Lower-bound constraints $x_k-l_k \geq 0$ with negation of the negative coefficients $c_k$.
	\end{inparaenum}
	
	To ensure numerical stability, this procedure is implemented using an arbitrary-precision arithmetic library~\cite{GMP}, in contrast to other components of the \theory-solver. The correctness of this reduction is proved in Appendix~\ref{app:proodred}.

	\mysubsection{Proofs of \theory-Lemmas.}
	\theory-lemmas in Marabou describe the derivation of a \emph{derived bound} from a previously established \emph{causing bound}, together with a proof rule that justifies this derivation. Consequently, proving a \theory-lemma consists of two parts: first, a proof of the causing bound, and second, Alethe proof steps that encode the corresponding Marabou proof rule.
	The first part is proven in Marabou using the Farkas lemma, and its translation to Alethe follows the reduction described above. The second part relies primarily on Boolean reasoning over \relu{} phases and the bounds they induce. Due to space limitations, we provide an illustrative example of such a derivation in Appendix~\ref{app:proodred}.

	\mysubsection{Handling Holes.}
	Before producing a candidate proof vector $w$, Marabou performs a sanity check based on the Farkas lemma to validate its correctness. If the proof vector passes this check, it is written as part of the proof; otherwise, Marabou deems the constructed proof vector invalid, and the corresponding case split is marked as \emph{delegated} to external, numerically stable solvers. Those, in turn, could either correct the proof, or reveal errors made in the verification process.
	Prior work indicates that such delegation is rare~\cite{IsBaZhKa22, IsReWuBaKa26}.
	In the Alethe format, this situation is represented using the \texttt{hole} rule,  indicating that the derivation of a given clause is left unproven.
	
	\section{Parallelization}
	\label{sec:snc}
	
	Another key feature of \picid{} is the integration of a
	\emph{split and conquer} (SNC)~\cite{WuOzZeJuIrGoFoKaPaBa20} parallelization scheme,
	enabling concurrent solving of queries within the proof producing CDCL(\theory)
	framework.
	The core idea is to split complex queries into $M$ simpler  subqueries and handle each of them independently.
	The subqueries are placed in a shared work queue, and a pool of $N$ worker
	threads (typically $N \leq M$) repeatedly retrieves subqueries from this queue
	and solves them independently. This process continues until the queue is
	exhausted.
	
	The global verification result is determined by aggregating the outcomes of
	all subqueries. If any worker identifies a satisfying assignment for its
	assigned subquery, the original query is immediately declared \sat.
	Conversely, the original query is \unsat{} only after all
	subqueries have been deduced as unsatisfiable.
	
	To support CDCL solving and Alethe proof production within this parallel
	setting, we instantiate a dedicated SAT solver instance for each subquery.
	Each solver is initialized with the assumptions corresponding to the specific
	subquery it is responsible for. In parallel, each worker also creates a
	dedicated \writer{} instance, which records the proof steps associated with that
	subquery. In \unsat{} queries, upon the termination of all workers, a single worker
	constructs the final concluding proof steps for the original query. These final
	steps combine the concluding steps of all subquery proofs, into
	a complete proof of \unsat{} for the original verification query.

	\section{Evaluation}
	\label{sec:Evaluation}
	We evaluated \picid{} along two axes: first, its scalability as a reliable verifier; and second, the correctness of the proofs, and their ability to detect flaws in the verification process.
	As a baseline, we consider the standard proof-producing variant of Marabou,
	which was adapted to produce proofs in the Alethe
        format.\footnote{Another possible baseline is the
          proof-producing version of Marabou as used in~\cite{DeIsKoStPaKa25};
          but that work uses an unoptimized version of Marabou, making
          it less competitive.}
	We compare this baseline
	to two configurations of \picid{}: with and without enabling
    parallelization. This allows us to measure the impact of CDCL and parallel
	solving separately.
	All proofs produced by the evaluated methods are validated using the
	\textsc{Carcara} proof checker~\cite{AnLaBa23}, an independent and widely used proof checker for Alethe. Although \textsc{Carcara} itself is not formally verified, it uses an arbitrary precision arithmetic, making it more reliable than \picid. 
	
	We evaluate our approach on four benchmark suites used in VNN-COMP~\cite{BrMuBaJoLi23,KaLaBaBrDuFlJoKoMaNgWu25}: \textsc{ACASXu}~\cite{JuKoOw19}, comprising DNNs with 300 neurons; \textsc{CERSYVE}~\cite{YaHuWeLiLi25}, with networks ranging from 64 to 198 neurons; \textsc{SafeNLP}~\cite{CaArDaIsDiKiRiKo23,CaDiKoArDaIsKaRiLe25}, consisting of DNNs with 128 neurons; and \textsc{MNIST\_FC}~\cite{BaLiJo21}, which includes DNNs with 512--1024 neurons. These benchmarks are widely used in the DNN verification literature and were selected to cover a diverse range of network sizes, architectures, and application domains, including safety-critical settings.
	
	For \textsc{CERSYVE}, we evaluate all queries appearing in the
	benchmark suite over two commits of VNN-COMP benchmarks~\cite{VNNCOMP25}.
	For \textsc{MNIST\_FC}, the original benchmark suite comprises of queries over three fully connected network architectures with
	2$\times$256, 4$\times$256, and 6$\times$256 hidden layers. In our evaluation, we consider only the queries corresponding to the two smaller architectures.
	Moreover, each \textsc{MNIST\_FC} property is expressed as a disjunction of nine
	constraints. As disjunction is not yet natively supported in \picid, we preprocess each query by decomposing it into nine
	 subqueries, one per disjunct.

	\mysubsection{Experimental Setup.}
	All experiments were conducted on CPU machines running Linux Debian~12.
	We allocated a memory limit of 32GB for the verification process and 64GB
	for proof checking. We imposed a timeout of 90 minutes for \textsc{ACASXu} and \textsc{MNIST\_FC}, as they express queries over larger networks, and 
	30 minutes for \textsc{CERSYVE} and \textsc{SafeNLP}. Parallelization experiments were conducted using
	16 cores, and remaining methods were executed in a single core.

	\begin{table}[t]
		\centering
		\footnotesize
		\setlength{\tabcolsep}{4pt}
		\renewcommand{\arraystretch}{0.95}
		
		\caption{Evaluation results per benchmark. We report the number of solved
			queries, average verification and checking times, and proof validity
			(\textsc{valid}/ \textsc{holey}/ \textsc{invalid}). Averages are computed over
			queries solved by all methods.}
		\scalebox{0.85}{
			\begin{tabular}{l cc cc c c}
				\toprule
				\multirow{2}{*}{\textbf{Method}}
				& \multicolumn{2}{c}{\textbf{Solved (\#)}}
				& \multicolumn{2}{c}{\textbf{Avg. Verif. Time (s)}}
				& \multirow{2}{*}{\textbf{Avg. Check (s)}}
				& \multirow{2}{*}{\makecell{\textbf{Validity} \\ (V/H/I)}} \\
				\cmidrule(lr){2-3}\cmidrule(lr){4-5}
				& \textbf{SAT} & \textbf{UNSAT} & \textbf{SAT} & \textbf{UNSAT} & & \\
				\midrule
				
				\multicolumn{7}{l}{\textsc{ACAS Xu} (180 queries)}\\
				\midrule
				Marabou & 40 & 84  & 465.32 & 646.60 & 565.95 & 83/0/0 \\
				\picid{} & 41 & 98  & 200.45 & 255.69 & \textbf{172.89} & 97/1/0 \\
				\picid{} with SNC & \textbf{45} & \textbf{109} & \textbf{80.77} & \textbf{147.24} & 329.31 & 96/1/0 \\
				\midrule
				
				\multicolumn{7}{l}{\textsc{CERSYVE} (20 queries)}\\
				\midrule
				Marabou & \textbf{10} & 8 & 24.90 & 252.75 & 271.77 & 5/3/0 \\
				\picid & 9 & 9  & 5.89 & 90.44 & \textbf{79.58} & 6/2/1 \\
				\picid{} with SNC & 9 & \textbf{10} & \textbf{3.89} & \textbf{55.78} & 200.08 & 7/1/2 \\
				\midrule
				
				\multicolumn{7}{l}{\textsc{SafeNLP} (1080 queries)}\\
				\midrule
				Marabou & 576 & 204  & \textbf{0.95} & 115.47 & 231.97 & 158/0/0 \\
				\picid & 626 & 249  & 1.18 & 19.05 & \textbf{50.18} & 203/0/0 \\
				\picid{} with SNC & \textbf{635} & \textbf{267}  & 1.20 & \textbf{8.93} & 69.67 & 221/0/0 \\
				\midrule
				
				\multicolumn{7}{l}{\textsc{MNIST\_FC} (540 queries)}\\
				\midrule
				Marabou & 37 & 166  & 585.55 & 336.10 & 36.66 & 158/0/0 \\
				\picid & 30 & \textbf{212}  & 1036.69 & 57.60 & \textbf{9.31} & 204/0/0 \\
				\picid{} with SNC & \textbf{63} & 200  & \textbf{396.03} & \textbf{51.68} & 23.85 & 192/0/0 \\
				\bottomrule
			\end{tabular}
		}
		\label{tab:results}
		\vspace{-0.6cm}
	\end{table}

	Table~\ref{tab:results} summarizes the results for all
	benchmarks and configurations. Additional graphs detailing our results appear in Appendix~\ref{app:res}. Below, we analyze our results with respect to the two evaluation questions. Note that some queries were deduced \unsat{} during preprocessing, across all methods, and thus \picid{} does not produce their proofs.
	
	\mysubsection{Performance Speed.} Overall, \picid{} significantly
	outperforms Marabou in most cases, both in terms of the
	number of solved queries and in average verification and proof-checking times. In particular, \picid{} solved queries $2.5\times$-$6\times$ faster for \unsat{}
	queries, and demonstrates mixed results of $0.56\times$-$4.2\times$ improvement factor for \sat{} queries
	on average.
	When considering SNC, performance is further improved, leading to an increased number of solved queries and reduced average verification
	times by $4.4\times$-$12.9\times$ for \unsat{} queries and by $0.79\times$-$6.4\times$ for \sat{} queries on average.
	However,
	we observe that proof checking times are higher for \picid{} with SNC. We hypothesize this is due to the fact that worker threads contribute independent proof fragments, which are subsequently combined into a single proof. Therefore, enabling workers to share clauses and their corresponding proofs, may lead to a decrease in proof size and checking speed. We leave this to future work.
	
	\mysubsection{Proof Correctness.}
	Overall, in three benchmarks, all proofs whose checking terminated within the allotted time and memory constraints were found correct by \textsc{Carcara}, besides a single unstable query of \textsc{ACASXu}, as previously known~\cite{IsBaZhKa22, IsReWuBaKa26}. 	The \textsc{CERSYVE} benchmark, however, exhibits cases where \picid{} and Marabou are numerically unstable. This is reflected in three scenarios:
	\begin{inparaenum}[(i)]
		\item In several cases, both tools detect this instability, resulting in the generation of proof holes. These proofs are reported as \emph{holey}, meaning all other proof steps were correctly checked;
		\item  Two queries that are solved as \unsat{} only by \picid{}, yield invalid proofs, detected by \textsc{Carcara}; and
		\item In one instance, Marabou reports \sat{} while \picid{} incorrectly reports \unsat~\cite{KaLaBaBrDuFlJoKoMaNgWu25}, accompanied with a holey proof. The holes of the proof correspond to the potential subspaces that were incorrectly solved, thus narrowing the possible states that lead to error.
	\end{inparaenum}
	
	Overall, \picid{} was able to generate valid proofs for more than $99.9\%$ of the cases. In all other cases, the proof mechanism has successfully detected flaws in the verification process, either by failing to construct correct proofs, or by generating proof holes that pinpoint the potentially erroneous step.

	Note that \picid{} can be compared to proof producing SMT solvers, such as cvc5~\cite{BaBaBrKrLaMaMoMoNiNoOzPrReShTiZo22}. Previous experiments repeatedly show that these do not scale beyond toy DNNs~\cite{KaBaDiJuKo21,PuTa12}. Preliminary experiments we conducted confirm this (e.g., cvc5 did not solve any \textsc{ACASXu} query in 2 hours), and so we omit the details.
	
	\section{Conclusion}
	\label{sec:conclusion}
	We present \picid, a proof-producing DNN verifier that implements a parallel CDCL(\theory) architecture in which conflict clauses are derived from \unsat{} proofs. \picid{} is the first DNN verifier to generate proofs in a standard SMT  format, and it advances the scalability of DNN verifiers capable of producing formal proofs.
	
	\mysubsection{Related Work.}
	Several DNN verifiers employ CDCL-like verification schemes~\cite{LiYaZhHu24,DuNgDw24,Eh17,ZhBrHaZh24}, leveraging the analogy between \relu{} activation phases and Boolean assignments. Notably, these verifiers either rely on naive clause-learning or first detect UNSAT cases and subsequently reinvoke solving procedures to minimize clause size~\cite{ChDr91,ZhBrHaZh24}. In contrast, our approach directly exploits information from UNSAT proofs to derive non-trivial conflict clauses.
	
	Most closely related to our work is the NeuralSAT DNN verifier~\cite{DuNgDw24}, which employs a CDCL(\theory) architecture and is able to produce \unsat{} proofs~\cite{DuNg25}. However, NeuralSAT's proofs are not directly comparable to ours: they are checked through repeated invocations of an MILP solver, and require additional elaboration before they can be translated into a standard proof format. Consequently, we expect NeuralSAT to outperform \picid{}, at the cost of the resulting proofs being far less detailed. 
	Unfortunately, technical issues prevented us from generating and checking NeuralSAT proofs.
	
	Another approach to improving the reliability of DNN verifiers
        is presented in~\cite{SiSaMeSi25}. This work seeks to prove
        the correctness of abstract-interpretation-based DNN
        verifiers, formalized in a dedicated domain-specific language. 
        In contrast, \picid{} produces SMT-like proofs for individual
        verification queries, while making fewer assumptions regarding
        the internal workings of the verifier. Consequently, the two
        approaches are not directly comparable.
	
	Finally, a variety of parallelization techniques have been explored in SMT solving and DNN verification~\cite{WuOzZeJuIrGoFoKaPaBa20,WaZhXuLiJaHsKo21}. To the best of our knowledge, our approach is the first technique for DNN verification that inherently produces proofs.

	\mysubsection{Future Work.} We plan to extend \picid{} along several directions. First, we aim to support additional piecewise-linear activation functions. In addition, we plan to extend proof production to incomplete verification methods, enabling support for a broader range of optimizations and non-linear activation functions.
	Second, we aim to further reduce \picid's proof size. This can be achieved by allowing workers to share learned clauses together with their associated proofs, reducing redundant derivations of similar lemmas.
	We believe that combining these directions will yield a faster and more versatile proof-producing verifier, capable of scaling to significantly larger DNNs than is currently possible.
	
	\mysubsection{Acknowledgments.} The work of Isac, Refaeli and Katz was supported by the Binational Science Foundation (grant numbers 2020250 and 2021769), the Israeli Science Foundation (grant number 558/24) and the European Union (ERC, VeriDeL, 101112713). Views and opinions expressed are however those of the author(s) only and do not necessarily reflect those of the European Union or the European Research Council Executive Agency. Neither the European Union nor the granting authority can be held responsible for them. 
	The work of Barrett was supported in part by the Binational Science Foundation (grant number 2020250), the National Science Foundation (grant numbers 1814369 and 2211505), and the Stanford Center for AI Safety.
	\bibliographystyle{abbrv}
	\bibliography{CDCL}

@article{KaBaDiJuKo21,
 author = {Katz, G. and Barrett, C. and Dill, D. and Julian, K. and Kochenderfer, M.},
 title = {{Reluplex: a Calculus for Reasoning about Deep Neural Networks}},
 journal = {Formal Methods in System Design (FMSD)},
 year = {2021},
}

@inproceedings{JiRi21,
	title = {{Exploiting Verified Neural Networks via Floating Point Numerical Error}},
	author = {Jia, K. and Rinard, M.},
	year = {2021},
	booktitle = {Proc. 28th Int. Static Analysis Symposium (SAS)},
	pages={191--205},
}

@article{JuKoOw19,
	title = {{Deep Neural Network Compression for Aircraft Collision Avoidance Systems}},
	volume = {42},
	pages = {598--608},
	number = {3},
	journal = {Journal of Guidance, Control, and Dynamics},
	author = {Julian, K. and Kochenderfer, M. and Owen, M.},
	year = {2019},
}

@incollection{BaTi18,
  author                   = {Barrett, C. and Tinelli, C.},
  title                    = {{Satisfiability Modulo Theories}},
  editor                   = {Clarke, E. and Henzinger, T. and Veith, H. and Bloem, R.},
  booktitle                = {Handbook of Model Checking},
  year                     = {2018},
  pages                    = {305--343},
  publisher                = {Springer International Publishing},
}

@article{Va98,
	title={{Linear Programming: Foundations and Extensions}},
	author={Vanderbei, R.},
	journal={Journal of the Operational Research Society},
	year={1998},
}

@inproceedings{GeMiDrTsChVe18,
  Title                    = {{AI2: Safety and Robustness Certification of Neural Networks with Abstract Interpretation}},
  Author                   = {Gehr, T. and Mirman, M. and Drachsler-Cohen, D. and Tsankov, E. and Chaudhuri, S. and Vechev, M.},
  year                     = {2018},
  booktitle                = {Proc. 39th IEEE Symposium on Security and Privacy (S\&P)},
}

@inproceedings{ZhMaMoMa01,
	title={{Efficient Conflict Driven Learning in a Boolean Satisfiability Solver}},
	author={Zhang, L. and Madigan, C. and Moskewicz, M. and Malik, S.},
	booktitle={Proc. IEEE/ACM Int. Conf. on Computer Aided Design (ICCAD)},
	pages={279--285},
	year={2001},
}

@inproceedings{WaPeWhYaJa18,
  author    = {Wang, S. and Pei, K. and Whitehouse, J. and Yang, J. and Jana, S.},
  title     = {{Formal Security Analysis of Neural Networks using Symbolic Intervals}},
  booktitle = {Proc. 27th {USENIX} Security Symposium},
  pages     = {1599--1614},
  year      = {2018}
}

@inproceedings{HuKwWaWu17,
  Title                    = {{Safety Verification of Deep Neural Networks}},
  Author                   = {Huang, X. and Kwiatkowska, M. and Wang, S. and Wu, M.},
  booktitle                = {Proc. 29th Int. Conf. on Computer Aided Verification (CAV)},
  Year                     = {2017},
  Pages                    = {3--29}
}

@InProceedings{TrBaXiJo20,
  Title = {{Verification of Deep Convolutional Neural Networks Using ImageStars}},
  Author = {Tran, H.-D. and Bak, S. and Xiang, W. and Johnson, T.},
  Booktitle = {Proc. 32nd Int. Conf. on Computer Aided Verification (CAV)},
  Year = {2020},
  pages = {18--42}
}

@inproceedings{LyKoKoWoLiDa20,
  title = {{Fastened Crown: Tightened Neural Network Robustness Certificates}},
  author = {Lyu, Z. and Ko, C.-Y. and Kong, Z. and Wong, N. and Lin, D. and Daniel, L.},
  year = {2020},
  booktitle = {Proc. 34th AAAI Conf. on Artificial Intelligence (AAAI)},
  pages = {5037--5044},
}

@InProceedings{Eh17,
  Title = {{Formal Verification of Piece-Wise Linear Feed-Forward Neural Networks}},
  Author = {Ehlers, R.},
  Booktitle = {Proc. 15th Int. Symp. on Automated Technology for Verification and Analysis (ATVA)},
  pages = {269--286},
  Year = {2017},
}

@Article{BaDeFo15,
  Title                    = {{Proofs in Satisfiability Modulo Theories}},
  Author                   = {Barrett, Clark and de Moura, Leonardo and Fontaine, Pascal},
  Journal                  = {All about Proofs, Proofs for All},
  Year                     = {2015},
  Number                   = {1},
  Pages                    = {23--44},
  Volume                   = {55}
}

@InProceedings{KaBaTiReHa16,
  Title                    = {{Lazy Proofs for DPLL(T)-Based SMT Solvers}},
  Author                   = {Katz, G. and Barrett, C. and Tinelli, C. and Reynolds, A. and Hadarean, L.},
  Booktitle                = {Proc. 16th Int. Conf. on Formal Methods in Computer-Aided Design (FMCAD)},
  Year                     = {2016},
  Pages                    = {93--100}
}

@article{PuTa12,
  title                    = {{Challenging SMT Solvers to Verify Neural Networks}},
  author                   = {Pulina, L. and Tacchella, A.},
  journal                  = {AI Communications},
  volume                   = {25(2)},
  pages                    = {117--135},
  year                     = {2012},
}

@article{BrMuBaJoLi23,
	title={{First Three Years of the International Verification of Neural Networks Competition (VNN-COMP)}},
	author={Brix, Christopher and M{\"u}ller, Mark and Bak, Stanley and Johnson, Taylor and Liu, Changliu},
	journal={Int. Journal on Software Tools for Technology Transfer},
	year={2023},
	pages={1-11},
}

@inproceedings{ElElIsDuGaPoBoCoKa24,
	title={{Robustness Assessment of a Runway Object Classifier for Safe Aircraft Taxiing}},
	author={Elboher, Yizhak and Elsaleh, Raya and Isac, Omri and Ducoffe, M{\'e}lanie and Galametz, Audrey and Pov{\'e}da, Guillaume and Boumazouza, Ryma and Cohen, No{\'e}mie and Katz, Guy},
	booktitle = {Proc. 43rd Int. Digital Avionics Systems Conf. (DASC)},
	year={2024}
}

@inproceedings{AmFrKaMaRe23,
	Title = {{veriFIRE: Verifying an Industrial, Learning-Based Wildfire 
	Detection System}},
	Author = {Amir, G. and Freund, Z. and Katz, G. and Mandelbaum, E. and 
	Refaeli, I.},
	Booktitle = {Proc. 25th Int. Symposium on Formal Methods (FM)},
	Year = {2023},
	Pages = {648--656},
}

@InProceedings{IsBaZhKa22,
	Title = {{Neural Network Verification with Proof Production}},
	Author = {Isac, O. and Barrett, C. and Zhang, M. and Katz, G.},
	Booktitle = {Proc. 22nd Int. Conf. on Formal Methods in Computer-Aided Design
	(FMCAD)},
	Pages = {38--48}, 
	Year = {2022}
}

@misc{DuNgDw24,
	title={{A DPLL(T) Framework for Verifying Deep Neural Networks}},
	author={Duong, Hai and Nguyen, ThanhVu and Dwyer, Matthew},
	note   = {Technical Report. \url{http://arxiv.org/abs/2307.10266}},
	year={2024}
}

@inproceedings{DuNg25,
	title={{Generating and Checking DNN Verification Proofs}},
	author={Duong, Hai and Nguyen, ThanhVu},
		booktitle={Proc. 39th Annual Conf. on Neural Information Processing Systems (NeurIPS)},
	year={2025}
}

@inproceedings{LiYaZhHu24,
	title={{DeepCDCL: A CDCL-based Neural Network Verification Framework}},
	author={Liu, Zongxin and Yang, Pengfei and Zhang, Lijun and Huang, Xiaowei},
	booktitle={Proc. 18th Int. Symposium on Theoretical Aspects of Software Engineering (TASE)},
	pages={343--355},
	year={2024},
}

@inproceedings{FaNiPrKiSzBi23,
	title={{IPASIR-UP: User Propagators for CDCL}},
	author={Fazekas, Katalin and Niemetz, Aina and Preiner, Mathias and Kirchweger, Markus and Szeider, Stefan and Biere, Armin},
	booktitle={Proc. 26th Int. Conf. on Theory and Applications of Satisfiability Testing (SAT)},
	pages={1--13},
	year={2023}
}

@article{FaNiPrKiSzBi24,
	title={{Satisfiability Modulo User Propagators}},
	author={Fazekas, Katalin and Niemetz, Aina and Preiner, Mathias and Kirchweger, Markus and Szeider, Stefan and Biere, Armin},
	journal={Journal of Artificial Intelligence Research},
	volume={81},
	pages={989--1017},
	year={2024}
}

@inproceedings{BiFaFaFlFrPo24,
	title={{CaDiCaL 2.0}},
	author={Biere, Armin and Faller, Tobias and Fazekas, Katalin and Fleury, Mathias and Froleyks, Nils and Pollitt, Florian},
	booktitle={Proc. 36th Int. Conf. on Computer Aided Verification (CAV)},
	pages={133--152},
	year={2024}
}

@inproceedings{MoMaZhZhMa01,
	title={{Chaff: Engineering an Efficient SAT Solver}},
	author={Moskewicz, Matthew and Madigan, Conor and Zhao, Ying and Zhang, Lintao and Malik, Sharad},
	booktitle={Proc. 38th Annual Design Automation Conference (DAC)},
	pages={530--535},
	year={2001}
}

@inproceedings{WuIsZeTaDaKoReAmJuBaHuLaWuZhKoKaBa24,
	title={{Marabou 2.0: A Versatile Formal Analyzer of Neural Networks}},
	author={H. Wu and O. Isac and A. Zelji\'c and T. Tagomori and M. Daggitt and W. Kokke and I. Refaeli and G. Amir and K. Julian and S. Bassan and P. Huang and O. Lahav and M. Wu and M. Zhang and E. Komendantskaya and G. Katz and C. Barrett},
	booktitle = {Proc. 36th Int. Conf. on Computer Aided Verification (CAV)},
	year={2024}
}

@inproceedings{SiSk96,
	title={{GRASP-A New Search Algorithm for Satisfiability}},
	author={Silva, J and Sakallah, Karem},
	booktitle={Proc. 15th  Int. Conf. on Computer Aided Design (ICCAD)},
	pages={220--227},
	year={1996}
}

@article{SiSk99,
	title={{GRASP: A Search Algorithm for Propositional Satisfiability}},
	author={Silva, JP Marques and Sakallah, Karem A},
	journal={IEEE Transactions on Computers},
	volume={48},
	number={5},
	pages={506--521},
	year={1999},
	publisher={IEEE}
}

@inproceedings{BaSc97,
	title={{Using CSP Look-Back Techniques to Solve Real-World SAT Instances}},
	author={Bayardo Jr, Roberto and Schrag, Robert},
	booktitle={Proc. 14th Nat. Conf. on Artificial Intelligence (AAAI)},
	pages={203--208},
	year={1997}
}

@inproceedings{WuOzZeJuIrGoFoKaPaBa20,
	title={{Parallelization Techniques for Verifying Neural Networks}},
	author={Wu, Haoze and Ozdemir, Alex and  Zelji\'c, Aleksandar and Julian, Kyle and Irfan, Ahmed and Gopinath, Divya and Fouladi, Sadjad and Katz, Guy and Pasareanu, Corina and Barrett, Clark},
	booktitle={Proc. 20th Int. Conf. on Formal Methods in Computer-Aided Design (FMCAD)},
	pages={128--137},
	year={2020}
}

@InProceedings{GaGePuVe19,
	Title                    = {{An Abstract Domain for Certifying Neural Networks}},
	Author                   = {Singh, G. and Gehr, T. and Puschel, M. and Vechev, M.},
	Booktitle                = {Proc. 46th ACM SIGPLAN Symposium on Principles of Programming Languages (POPL)},
	Year                     = {2019},
}

@InProceedings{WaZhXuLiJaHsKo21,
	title={{Beta-Crown: Efficient Bound Propagation with Per-Neuron Split Constraints for Neural Network Robustness Verification}},
	author={Wang, Shiqi and Zhang, Huan and Xu, Kaidi and Lin, Xue and Jana, Suman and Hsieh, Cho-Jui and Kolter, Zico},
	booktitle={Proc. 35th Annual Conf. on Neural Information Processing Systems (NeurIPS)},
	pages={29909--29921},
	year={2021}
}

@inproceedings{AvBlChHeKoPr19,
	title = {{Run-Time Optimization for Learned Controllers through Quantitative Games}},
	author = {Avni, G. and Bloem, R. and Chatterjee, K. and Henzinger, T. and Konighofer, B. and Pranger, S.},
	year = {2019},
	booktitle = {Proc. 31st Int. Conf. on Computer Aided Verification (CAV)},
	pages = {630--649},
}

@inproceedings{AkKeLoPi19,
	title = {{Verification of RNN-Based Neural Agent-Environment Systems}},
	author = {Akintunde, M. and Kevorchian, A. and Lomuscio, A. and Pirovano, E.},
	year = {2019},
	booktitle = {Proc. 33rd AAAI Conf. on Artificial Intelligence (AAAI)},
	pages = {197--210}
}

@inproceedings{BaShShMeSa19,
	Title                    = {{Quantitative Verification of Neural Networks And its Security Applications}},
	Author                   = {Baluta, T. and Shen, S. and Shinde, S. and  Meel, K. and Saxena, P.},
	Booktitle                = {Proc. 26th ACM Conf. on Computer and Communication Security (CCS)},
	Year                     = {2019}
}

@InProceedings{PuTa10,
	Title                    = {{An Abstraction-Refinement Approach to Verification of Artificial Neural Networks}},
	Author                   = {Pulina, L. and Tacchella, A.},
	Booktitle                = {Proc. 22nd Int. Conf. on Computer Aided Verification (CAV)},
	Year                     = {2010},
	Pages                    = {243--257}
}

@inproceedings{SaDuMo19,
	title = {{Reaching Out Towards Fully Verified Autonomous Systems}},
	pages = {22--32},
	booktitle = {Proc. 13th Int. Conf. on Reachability Problems (RP)},
	author = {Sankaranarayanan, S. and Dutta, S. and Mover, S.},
	year = {2019},
}

@inproceedings{ZhShGuGuLeNa20,
	title = {{Verification of Recurrent Neural Networks for Cognitive Tasks via Reachability Analysis}},
	author = {Zhang, H. and Shinn, M. and Gupta, A. and Gurfinkel, A. and Le, N. and Narodytska, N.},
	year = {2020},
	pages = {1690--1697}, 
	booktitle = {Proc. 24th European Conf. on Artificial Intelligence (ECAI)}
}

@article{ChDr91,
	title={{Locating Minimal Infeasible Constraint Sets in Linear Programs}},
	author={Chinneck, John and Dravnieks, Erik},
	journal={ORSA Journal on Computing},
	volume={3},
	number={2},
	pages={157--168},
	year={1991}
}

@misc{LeLeGa24,
	title={{Neural Network Verification with PyRAT}},
	author={Lemesle, Augustin and Lehmann, Julien and Gall, Tristan},
	Note   = {Technical Report. \url{http://arxiv.org/abs/2410.23903}},
	year={2024}
}

@inproceedings{BaBaBrKrLaMaMoMoNiNoOzPrReShTiZo22,
	title = {{CVC5: A Versatile and Industrial-Strength SMT Solver}},
	pages = {415--442},
	author = {Barbosa, H. and Barrett, C. and Brain, M. and Kremer, G. and Lachnitt, H. and Mann, M. and Mohamed, A. and Mohamed, M. and Niemetz, A. and N\"{o}tzli, A. and Ozdemir, A. and Preiner, M. and Reynolds, A. and Sheng, Y. and Tinelli, C. and Zohar, Y.},
	year = {2022},
	Booktitle = {Proc. 28th Int. Conf. on Tools and Algorithms for the Construction and Analysis of Systems (TACAS)},
}

@inproceedings{IsReWuBaKa26,
	title={{Proof Minimization in Neural Network Verification}},
	author={Isac, Omri and Refaeli, Idan and Wu, Haoze and Barrett, Clark and Katz, Guy},
	booktitle={Proc. 27th Int. Conf. on Verification, Model Checking, and Abstract Interpretation (VMCAI)},
	pages={99--124},
	year={2026}
}

@inproceedings{DeIsPaStKoKa23,
	title={{Towards a Certified Proof Checker for Deep Neural Network Verification}},
	author={Desmartin, Remi and Isac, Omri and Passmore, Grant and Stark, Kathrin and Komendantskaya, Ekaterina and Katz, Guy},
	booktitle={Proc. 33rd Int. Symposium on Logic-Based Program Synthesis and Transformation (LOPSTR)},
	pages={198--209},
	year={2023}
}

@inproceedings{DeIsKoStPaKa25,
	title={{A Certified Proof Checker for Deep Neural Network Verification in Imandra}},
	author={Remi Desmartin and Omri Isac and Ekaterina Komendantskaya and Kathrin Stark and Grant Passmore and Guy Katz},
	year={2025},
	booktitle={Proc. 16th Int. Conf. on Interactive Theorem Proving
	(ITP)},
	pages={1--21}
}

@misc{KaLaBaBrDuFlJoKoMaNgWu25,
	title={{The 6th International Verification of Neural Networks Competition (VNN-COMP 2025): Summary and Results}},
	author={Kaulen, Konstantin and Ladner, Tobias and Bak, Stanley and Brix, Christopher and Duong, Hai and Flinkow, Thomas and Johnson, Taylor T and Koller, Lukas and Manino, Edoardo and Nguyen, ThanhVu H and Wu, Haoze},
	note   = {Technical Report. \url{http://arxiv.org/abs/2512.19007}},
	year={2025}
}

@misc{BaLiJo21,
	title={{The Second International Verification of Neural Networks Competition (VNN-COMP 2021): Summary and Results}},
	author={Bak, Stanley and Liu, Changliu and Johnson, Taylor},
		note   = {Technical Report. \url{http://arxiv.org/abs/2109.00498}},
	year={2021}
}

@inproceedings{ZoBaCsIsJe21,
	title={{Fooling a Complete Neural Network Verifier}},
	author={D{\'a}niel Zombori and Bal{\'a}zs B{\'a}nhelyi and Tibor Csendes and Istv{\'a}n Megyeri and M{\'a}rk Jelasity},
	booktitle={Proc. 9th Int. Conf. on Learning Representations (ICLR)},
	year={2021}
}

@inproceedings{ZhBrHaZh24,
	title={{Scalable Neural Network Verification with Branch-and-Bound Inferred Cutting Planes}},
	author={Zhou, Duo and Brix, Christopher and Hanasusanto, Grani A and Zhang, Huan},
	booktitle={Proc. 38th Annual Conf. on Neural Information Processing Systems (NeurIPS)},
	pages={29324--29353},
	year={2024}
}

@inproceedings{ScFlBaFo21,
	title={{Alethe: Towards a Generic SMT Proof Format}},
	author={Schurr, Hans-J{\"o}rg and Fleury, Mathias and Barbosa, Haniel and Fontaine, Pascal},
	booktitle={In Proc. 7th Workshop on Proof eXchange for Theorem Proving (PxTP)},
	pages={49–-54},
	year={2021}
}

@inproceedings{AnLaBa23,
	title={{Carcara: An Efficient Proof Checker and Elaborator for SMT Proofs in the Alethe Format}},
	author={Andreotti, Bruno and Lachnitt, Hanna and Barbosa, Haniel},
	booktitle={Proc. 29th Int. Conf. on Tools and Algorithms for the Construction and Analysis of Systems (TACAS)},
	pages={367--386},
	year={2023}
}

@article{HeBi15,
	title={{Proofs for Satisfiability Problems}},
	author={Heule, Marijn JH and Biere, Armin},
	journal={All about Proofs, Proofs for all},
	volume={55},
	number={1},
	pages={1--22},
	year={2015}
}

@inproceedings{LaFlBaJaAnReScBaTi25,
	title={{Improving the SMT Proof Reconstruction Pipeline in Isabelle/HO}L},
	author={Lachnitt, Hanna and Fleury, Mathias and Barbosa, Haniel and Jakpor, Jibiana and Andreotti, Bruno and Reynolds, Andrew and Schurr, Hans-J{\"o}rg and Barrett, Clark and Tinelli, Cesare},
	booktitle={Proc. 16th Int. Conf. on Interactive Theorem Proving (ITP)},
	pages={26--1},
	year={2025}
}

@inproceedings{CrLuHeHuWaKaSc17,
	title={{Efficient Certified RAT Verification}},
	author={Cruz-Filipe, Lu{\'\i}s and Heule, Marijn JH and Hunt Jr, Warren A and Kaufmann, Matt and Schneider-Kamp, Peter},
	booktitle={Proc. 26th Int. Conf. on Automated Deduction (CADE)},
	pages={220--236},
	year={2017}
}

@inproceedings{PoFlBi23,
	title={{Faster LRAT Checking than Solving with CaDiCaL}},
	author={Pollitt, Florian and Fleury, Mathias and Biere, Armin},
	booktitle={26th Int. Conf. on Theory and Applications of Satisfiability Testing (SAT)},
	pages={21--1},
	year={2023},
}

@article{SiSaMeSi25,
	author = {Singh, Avaljot and Sarita, Yasmin Chandini and Mendis, Charith and Singh, Gagandeep},
	title = {{Automated Verification of Soundness of DNN Certifiers}},
	year = {2025},
	volume = {9},
	journal = {Proc. ACM on Programming Languages}
}

@misc{GMP,
	title={{GMP (The GNU Multiple Precision	Arithmetic Library)}},
	Note={\url{https://gmplib.org/}},
}

@inproceedings{CaArDaIsDiKiRiKo23,
	author    = {Marco Casadio and Luca Arnaboldi and Matthew Daggitt and Omri Isac and Tanvi Dinkar and Daniel Kienitz and Verena Rieser and Ekaterina Komendantskaya},
	title     = {{ANTONIO: Towards a Systematic Method of Generating NLP Benchmarks for Verification}},
	booktitle = {Proc.  6th Workshop on Formal Methods for ML-Enabled Autonomous Systems (FoMLAS)},
	volume    = {16},
	pages     = {59--70},
	year      = {2023},
}

@article{CaDiKoArDaIsKaRiLe25,
	title={{NLP Verification: Towards a General Methodology for Certifying Robustness}},
	journal={European Journal of Applied Mathematics},
	author={Casadio, Marco and Dinkar, Tanvi and Komendantskaya, Ekaterina and Arnaboldi, Luca and Daggitt, Matthew L. and Isac, Omri and Katz, Guy and Rieser, Verena and Lemon, Oliver},
	year={2025},
	pages={1–58},
}

@article{YaHuWeLiLi25,
	title={{Scalable Synthesis of Formally Verified Neural Value Function for Hamilton-Jacobi Reachability Analysis}},
	author={Yang, Yujie and Hu, Hanjiang and Wei, Tianhao and Li, Shengbo Eben and Liu, Changliu},
	journal={Journal of Artificial Intelligence Research},
	volume={83},
	year={2025}
}

@misc{BaFlFoSc,
	title={{The Alethe Proof Format}},
	author={Barbosa, Haniel and Fleury, Mathias and Fontaine, Pascal and Schurr, Hans-J{\"o}rg},
	Note={\url{https://verit.loria.fr/documentation/alethe-spec.pdf}}
}

@misc{VNNCOMP25,
	title        = {{The VNNCOMP 2025 Benchmarks Repository}},
	note = {\url{https://github.com/VNN-COMP/vnncomp2025/benchmarks}},
	year         = {2025}
}

@inproceedings{Ba21,
	title={{nnenum: Verification of Relu Neural Networks with Optimized Abstraction Refinement}},
	author={Bak, Stanley},
	booktitle={Proc. 13th NASA Formal Methods Symposium (NFM)},
	pages={19--36},
	year={2021}
}

@article{KhNeRoXiBeBoFiHaLeYe23,
	title={{Analysis of Recurrent Neural Networks via Property-Directed Verification of Surrogate Models}},
	author={Khmelnitsky, Igor and Neider, Daniel and Roy, Rajarshi and Xie, Xuan and Barbot, Beno{\^\i}t and Bollig, Benedikt and Finkel, Alain and Haddad, Serge and Leucker, Martin and Ye, Lina},
	journal={International Journal on Software Tools for Technology Transfer},
	volume={25},
	number={3},
	pages={341--354},
	year={2023}
}

@inproceedings{XiKrNe22,
	title={{Neuro-Symbolic Verification of Deep Neural Networks}},
	author={Xie, Xuan and Kersting, Kristian and Neider, Daniel},
	booktitle={Proc. of the 31st Int. Joint Conf. on Artificial Intelligence (IJCAI)},
	year={2022}
}

@inproceedings{GiHenLech20,
	title={{How Many Bits Does it Take to Quantize Your Neural Network?}},
	author={Giacobbe, Mirco and Henzinger, Thomas A and Lechner, Mathias},
	booktitle={Proc. 26th Int. Conf. on Tools and Algorithms for the Construction and Analysis of Systems (TACAS)},
	pages={79--97},
	year={2020}
}

@inproceedings{SchFoGu22,
	title={{Verification of Neural-Network Control Systems by Integrating Taylor Models and Zonotopes}},
	author={Schilling, Christian and Forets, Marcelo and Guadalupe, Sebasti{\'a}n},
	booktitle={Proc. 36th AAAI Conf. on Artificial Intelligence (AAAI)},
	pages={8169--8177},
	year={2022}
}
	\newpage
	
	\appendix
	{\noindent\huge{Appendix}}
	\raggedbottom
	
	\renewcommand\thesection{\Alph{section}}
	\renewcommand\thesubsection{\thesection.\arabic{subsection}}
	\setcounter{section}{0}
	\section{Algorithm for Proof-Driven Clause Learning}
	\label{app:pdclalg}
	In this appendix, we detail the adaptation of the proof-minimization algorithm of~\cite{IsReWuBaKa26} to derive conflict clauses, as mentioned in~\Cref{sec:pdcl}.

	In Marabou, a proof is represented as a \emph{proof tree} corresponding to the abstract search tree constructed during case-split verification. Each node of the tree contains a sequence of bound-tightening lemmas, which capture the derivation of variable bounds induced by piecewise-linear activation functions. Each such lemma includes a \emph{causing bound} used to infer a \emph{derived bound}.
	These lemmas are proven by a proof vector, similar to this in~\Cref{thm:Farkas}, for proving the causing bound, along with a proof rule representing the deduction of the derived bound from the causing. The derived bound is then applicable in the proof tree rooted in the node in which the lemma is learned.
	Finally, each leaf of the proof tree is concluded with a proof of \unsat{} represented by a proof vector as in~\Cref{thm:Farkas}. 
	
	The goal of proof minimization is to reduce the number of lemmas used for proving \unsat. Conceptually, minimization is performed by backward analysis of the proof, starting from each leaf of the proof tree. When Marabou deduces \unsat{} for some subquery, the corresponding proof is analyzed to detect which bounds are used in the proof, and which of those are learned by bound tightening lemmas. A minimization step then retains a minimal subset of such lemmas. The procedure is applied recursively to the proofs of the retained lemmas and terminates once all required bounds are provided directly by the input constraints or by splitting decisions. Termination is guaranteed by assigning each lemma a unique chronological identifier and restricting the analysis of a lemma to those learned earlier in the proof tree.
	
	The adaptation for deriving conflict clauses is then straightforward. Whenever a decision bound is detected during analysis, the negation of its corresponding decision (Boolean) variable is added to the clause. We formalize this adaptation in~\Cref{alg:lemmadeduction} and omit the full description of the underlying minimization algorithm, as it requires additional technical detail beyond the scope of this appendix.
	
	\begin{algorithm}[t!]
		\textbf{Input:} A DNN verification subquery $Q=\langle V,A,l,u,R\rangle$, a list of learned lemmas $Lem$, and a proof of \unsat{} $w$ \\
		\textbf{Output:} A conflict clause \conf. 
		\caption{\emph{PDCL():} Construct proof-driven conflict clauses}
		\label{alg:lemmadeduction}
		\begin{algorithmic}[1]
			\Statex{// Let $\underset{i=1}{\overset{n}{\sum}}c_i\cdot
				x_i$ denote the linear combination $ w^\intercal \cdot A \cdot V$}
			\State {$Deps \leftarrow \emptyset$} \Comment{An empty list of lemmas}
			\State {\conf $ \leftarrow \emptyset$} \Comment{The conflict clause}
			\For{$i\in [n] $}
			\If {$c_i > 0$ and $u(x_i)$ is learned by  $\ell^u_i\in Lem$}
			\State{$Deps \leftarrow Deps \cup \ell^u_i$}
			\ElsIf {$c_i < 0$ and $l(x_i)$ is learned by  $\ell^l_i\in Lem$}
			\State{$Deps \leftarrow Deps \cup \ell^l_i$}
			\ElsIf {$c_i > 0$ and $u(x_i)$ is learned by a decision $D_i$}
			\State{\conf $\leftarrow$ \conf $\lor \lnot D_i$}
			\ElsIf {$c_i < 0$ and $l(x_i)$ is learned by a decision $D_i$}
			\State{\conf $\leftarrow$ \conf $\lor \lnot D_i$}
			\EndIf
			\EndFor
			\State{$Deps \leftarrow proofMin(Q,w,Deps, l',u')$}
			\Comment{Minimize dependencies as in~\cite{IsReWuBaKa26}}
			\For{$lem \in Deps$}
			\Comment{Repeat recursively}
			\State{$lem.includeInProof \leftarrow true$ }
			\State{$l',u' \leftarrow$ bounds with ID at most $lem.getID()$}
			\State{$w' \leftarrow lem.getProof()$ }
			\State{\conf' $\leftarrow PDCL(\langle V,A',l',u',R\rangle, Lem, w')$}
			\State{\conf$ \leftarrow$ \conf $\vee$ \conf'}
			\EndFor\\
			\Return{\conf}
		\end{algorithmic}
	\end{algorithm}
	
	\section{IPASIR-UP Implementation Details}
	\label{app:functionImplementation}
	This appendix describes the implementation details of the IPASIR-UP methods within the CdclCore of \picid, as introduced in~\Cref{sec:IPASIRUP}. The IPASIR-UP interface includes a \emph{callback} mechanism, in which
	the SAT solver sends requests to the \theory-solver, and a
	\emph{notification} mechanism, in which the SAT solver updates the
	\theory-solver with information. To properly implement this
	interface, we manage several internal
	data structures within \picid. These data structures include a map
	representing the Boolean abstraction
	(\texttt{satSolverVarToPlc}), a
	list of deduced literals pending propagation to the SAT solver
	(\texttt{literalsToPropagate}), a context-dependent list of currently
	assigned literals (\texttt{assignedLiterals}), and another
	context-dependent list of satisfied clauses in the current context.
	Below, we describe the implementation of each method:
	
	\mysubsection{\texttt{void notify\textunderscore assignment(const std::vector<int> \&lits)}}
	\label{sec:notify_assignment}
	
	This method processes a list of literals \texttt{lits} assigned by the SAT solver, notifying the theory solver accordingly. The \theory-solver iterates over each literal \texttt{lit} in \texttt{lits}. For each literal, \texttt{lit} is added to the internal list \texttt{assignedLiterals}. Subsequently, the corresponding piecewise-linear constraint (\texttt{plc}) is retrieved from \texttt{satSolverVarToPlc}. The phase of \texttt{plc} is fixed based on the sign of \texttt{lit}, which may lead to further tightening of bounds for other variables. Finally, the internal list \texttt{satisfiedClauses}
	is updated.
	
	\mysubsection{\texttt{void notify\textunderscore new\textunderscore decision\textunderscore level()}}
	\label{sec:notify_new_decision_level}
	
	This method is used to preserve the current state in context-dependent data structures.
	In particular, it stores the current lists of assigned literals (\texttt{assigned} \texttt{Literals}) and satisfied clauses (\texttt{satisfiedClauses}) into their designated data structures, ensuring consistency across decision levels.

	\mysubsection{\texttt{void notify\textunderscore backtrack(size\textunderscore t new\textunderscore level)}}
	\label{sec:notify_backtrack}
	
	This method facilitates backtracking to the specified decision level \texttt{new\textunderscore level}. It restores the internal state of all context-dependent structures to align with the state corresponding to the given decision level. In particular, this includes reverting the lists of assigned literals (\texttt{assignedLiterals}) and satisfied clauses (\texttt{satisfiedClauses}) to their respective states at \texttt{new\textunderscore level}.

	\mysubsection{\texttt{bool cb\textunderscore check\textunderscore found\textunderscore model(const std::vector<int> \&model)}}
	\label{sec:cb_check_found_model}
	
	This method verifies whether the complete boolean assignment \texttt{model}, provided by the SAT solver, satisfies all constraints encoded in the \theory-solver. The \theory-solver applies the functionality described in~\texttt{notify\textunderscore assignment} to process each previously unassigned literal in \texttt{model}. Following this, the \theory-solver is invoked to determine the satisfiability of the constraints imposed by \texttt{model}. If the \theory-solver identifies a satisfying assignment for all variables under the phase fixings defined by \texttt{model}, the method returns \texttt{true}. Conversely, if the \theory-solver concludes that no such satisfying assignment exists, the method returns \texttt{false}.

	\mysubsection{\texttt{int cb\textunderscore decide()}}
	\label{sec:cb_decide}
	
	This method is invoked by the SAT solver to determine the next decision literal. \picid{} integrates the VSIDS decision heuristic~\cite{MoMaZhZhMa01} with other existing heuristics. A score is computed for each literal based on the applied decision heuristic, and the literal with the highest score is selected. The selected literal is then returned as the decision literal.

	\mysubsection{\texttt{int cb\textunderscore propagate()}}
	\label{sec:cb_propagate}
	
	This method is invoked by the SAT solver to request additional literals for assignment that may have been deduced by the \theory-solver. The \theory-solver begins by checking whether the list of pending literals for propagation, \texttt{literalsToPropagate}, is empty. If \texttt{literalsToPropagate} is empty, the \theory-solver is executed to deduce new phase fixings, which are then added to \texttt{literalsToPropagate} for subsequent propagation.
	
	The method is repeatedly called to retrieve one literal for propagation at a time until no further literals remain. For each invocation, a single literal from \texttt{literalsToPropagate} is propagated. When \texttt{literalsToPropagate} becomes empty, the method signals this to the SAT solver by returning 0. Subsequent invocations of \texttt{cb\textunderscore propagate()} will repeat this process, ensuring that all newly deduced literals are propagated until no further deductions are possible.

	\mysubsection{\texttt{int cb\textunderscore add\textunderscore reason\textunderscore clause\textunderscore lit(int propagated\textunderscore lit)}}
	\label{sec:cb_add_reason_clause_lit}
	
	This method requests a reason clause from the \theory-solver, providing an explanation for the previously propagated literal \texttt{propagated\textunderscore lit}. Similar to the process described in \texttt{cb\textunderscore propagate}, the reason clause is constructed and propagated one literal at a time. The end of the clause is signaled by returning 0.

	\mysubsection{\texttt{bool cb\textunderscore has\textunderscore external\textunderscore clause()} / \texttt{int cb\textunderscore add\textunderscore external\textunderscore clause\textunderscore lit()}}
	\label{sec:cb_external_clause}
	
	These two methods are used to handle conflict clauses. \texttt{cb\textunderscore has\textunderscore external\textunderscore} \texttt{clause()} checks whether a conflict clause was deduced by the \theory-solver and is ready for propagation. If a conflict clause is available, \texttt{cb\textunderscore add\textunderscore external\textunderscore clause\textunderscore} \texttt{ lit()} propagates one literal from the conflict clause at a time. The end of the conflict clause is signaled by returning 0.

	\section{Reduction of Marabou Proofs to the Alethe Format}
	\label{app:proodred}
	In this appendix, we elaborate on the reduction of Marabou proof vectors to a form compatible with the Alethe \texttt{la$\_$generic} rule, as described in~\Cref{sec:proofs}. We begin by establishing the correctness of reducing an \unsat{} witness based on the Farkas lemma variant in~\Cref{thm:Farkas} to an \unsat{} witness in the variant supported by Alethe.
	Then we provide an elaborative example, demonstrating how proof rules in Marabou, describing the derivation of a \emph{derived bound} from a \emph{causing bound} based on the \relu{} activation function. The complete list of these proof rules appears in~\cite{IsBaZhKa22}.
	
	\subsection{Reduction of Farkas Vectors}

	Recall that in Alethe, the \texttt{la$\_$generic} rule is checked by defining a sequence of atoms $\varphi_1, \ldots, \varphi_n$ together with corresponding coefficients. Then, the weighted linear combination is reduced to the contradiction $c > 0$ or $c \geq 0$ for some negative constant $c$~\cite{BaFlFoSc}.
	
	To translate Marabou proof vectors as defined in~\Cref{thm:Farkas} into the Alethe format, we instantiate atoms using rows of the constraint matrix $A$ and bound expressions of the form $x_i - l_i \geq 0$ and $u_i - x_i \geq 0$ for lower and upper bounds, respectively. The remaining task is to compute the appropriate coefficients for each of these expressions.
	Formally, we prove the following:
	\begin{theorem}[Reduction of Marabou Proof Vectors to Alethe]
	\label{thm:reduction}
	Consider the query $ A\cdot V = \bar{0} $ and $ l \leq V \leq u $, for $ A \in M_{m \times n} (\mathbb{R})$  and $ l,V,u \in \rn{n} $. Let $\mathcal{L}, \mathcal{U}$ be a list of $n$ expressions, whose $i^{th}$ entry is $x_i - l_i, u_i-x_i$ respectively.
	If $ \exists w \in \rn{m} $ such that for 
	$w^\intercal \cdot A \cdot V \coloneq \underset{i=1}{\overset{n}{\sum}}c_i\cdot x_i$, we have that  $ \underset{c_i > 0}{\sum}c_i\cdot u_i +  \underset{c_i < 0}{\sum}c_i\cdot l_i < 0$, then we can construct a proof vector $(w_1,w_2,w_3)$ such that  $w_1^\intercal \cdot A \cdot V + w_2^\intercal \cdot \mathcal{L} + w_3^\intercal \cdot \mathcal{U}$ is reduced to a negative constant.
\end{theorem}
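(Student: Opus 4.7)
The plan is to set $w_1 := w$ unchanged and then to choose $w_2, w_3 \in \mathbb{R}^n$ so that, when added into the combination, each variable term $c_i x_i$ coming from $w^\intercal A V$ is exactly cancelled by a contribution from $\mathcal{L}_i = x_i - l_i$ or $\mathcal{U}_i = u_i - x_i$, leaving behind only the corresponding bound constant. Concretely, I would define
\[
w_2[i] := \begin{cases} -c_i & c_i < 0 \\ 0 & \text{otherwise} \end{cases},
\qquad
w_3[i] := \begin{cases} c_i & c_i > 0 \\ 0 & \text{otherwise} \end{cases}.
\]
Note that by construction $w_2, w_3 \geq 0$, which is important because these coefficients are applied to non-strict inequality atoms.

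The next step is a direct algebraic verification. Substituting the choices above,
\[
w_1^\intercal A V + w_2^\intercal \mathcal{L} + w_3^\intercal \mathcal{U}
= \sum_{i=1}^{n} c_i x_i + \sum_{c_i<0} (-c_i)(x_i - l_i) + \sum_{c_i>0} c_i (u_i - x_i).
\]
The $x_i$-terms cancel pairwise in each sum, leaving exactly $\sum_{c_i<0} c_i l_i + \sum_{c_i>0} c_i u_i$, which by the hypothesis of Theorem~\ref{thm:Farkas} is strictly negative. Hence the expression reduces to a negative constant, matching the contradiction form expected by the \texttt{la\_generic} rule.

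The main obstacle I anticipate is aligning the construction with the precise semantics of Alethe's \texttt{la\_generic}: the rule derives a clause whose literals $\varphi_i$ are asserted to be jointly infeasible via the supplied Farkas coefficients, so one must confirm that the polarities of the rows $A_j V = 0$ and the bound atoms $x_i - l_i \geq 0$, $u_i - x_i \geq 0$ are consistent with how the checker interprets negated literals and how it treats equalities versus inequalities. For the equalities encoded by rows of $A$, any sign of $w_1[j]$ is admissible (an equality can be split into two oriented inequalities), while for the bound atoms the non-negativity $w_2, w_3 \geq 0$ we established is essential. Once this bookkeeping is carried out carefully, the algebraic identity above discharges the reduction; because every manipulation is purely symbolic, performing it in arbitrary-precision arithmetic (as \picid{} does) preserves exactness and ensures the constructed witness is checkable as-is.
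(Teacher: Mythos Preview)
Your construction is exactly the one used in the paper: set $w_1:=w$, $w_2[i]:=-c_i$ for $c_i<0$ (else $0$), $w_3[i]:=c_i$ for $c_i>0$ (else $0$), and verify algebraically that the variable terms cancel to leave $\sum_{c_i>0}c_i u_i+\sum_{c_i<0}c_i l_i<0$. Your additional remarks on the non-negativity of $w_2,w_3$ and the handling of equality rows are sound bookkeeping beyond what the paper's proof spells out, but the core argument is identical.
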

\begin{proof}
	Let $w_1\coloneq w$, and consider the linear combination $w^\intercal \cdot A \cdot V \coloneq \underset{i=1}{\overset{n}{\sum}}c_i\cdot x_i$. For $w_2$, its $i^{th}$ entry is $-c_i$ if and only if $c_i < 0$ and zero otherwise.  For $w_3$, its $i^{th}$ entry is $c_i$ if and only if $c_i > 0$ and zero otherwise. 
	Then  $w_1^\intercal \cdot A \cdot V + w_2^\intercal \cdot l + w_3^\intercal \cdot u =  \underset{i=1}{\overset{n}{\sum}}c_i\cdot x_i + \underset{c_i < 0}{\sum}(-c_i)\cdot (x_i-l_i) + \underset{c_i > 0}{\sum}c_i\cdot (u_i - x_i) =  \underset{c_i < 0}{\sum}c_i\cdot x_i +  \underset{c_i > 0}{\sum}c_i\cdot x_i + \underset{c_i < 0}{\sum}c_i\cdot (l_i-x_i) + \underset{c_i > 0}{\sum}c_i\cdot (u_i - x_i) =  \underset{c_i > 0}{\sum}c_i\cdot u_i +  \underset{c_i < 0}{\sum}c_i\cdot l_i $. By our assumption, this is a negative constant. $\square$
\end{proof}

\subsection{Example of Converting Proof Rules for \theory-lemmas}

In Marabou, the constraint $f = \relu(b)$ is accompanied by an auxiliary variable $aux$ representing the non-negative difference $f - b$. By the definition of the \relu{} function ($f = \max(b, 0)$), if the variable $b$ is strictly positive, then $f$ is also strictly positive and consequently $aux = f - b = 0$.
This reasoning is captured by \theory-lemmas such as $b \geq 1 \Rightarrow aux \leq 0$, which formalize the derivation of the derived bound from the causing bound.

In Alethe, this is represented using the following proof-fragment:
\definecolor{listingbg}{RGB}{240,240,240}
\lstset{
	backgroundcolor=\color{listingbg},
	linewidth=\linewidth,
	basicstyle=\ttfamily\small,
	numbers=left,
	columns=fullflexible,
}

\begin{lstlisting}
(assume relu (xor
                 (!(and (>= b 0.0) (<= aux 0.0)) :named active)
                 (!(and (<= b 0.0) (<= f 0.0)) :named inactive)
                ))
	
(step s (cl active inactive) :rule xor1 :premises(relu))
(step si (cl (not inactive) (<= b 0.0)) :rule and_pos :args(0))
(step sa (cl (not active) (<= aux 0.0)) :rule and_pos :args(1))
	
(step cb (cl (>= b 1)) :rule resolution :premises( ... )
(step t1 (cl (or 
             (not (<= b 0.0))
             (not (>= b 1.0)))
           ):rule la_tautology)
(step t2 (cl (not (<= b 0.0)) (not (>= b 1.0))) :rule or :premises(t1))
	
(step db (cl (<= aux 0.0)) :rule resolution :premises(cd t2 si s sa))
\end{lstlisting}

Indeed, this fragment corresponds to the derivation described above.
Lines 1--4 encode the \relu{} constraint as an assumption expressing the \texttt{xor} of its phases, where each phase is represented as a conjunction of the corresponding bounds.
Lines 5--8 capture predicate reasoning over the \relu{} phases by deriving specific bounds; these steps are independent of the particular \theory-lemma instance and therefore need to be included in the proof only once.
Line~10 indicates that we have derived the causing bound, namely $b \geq 1$, and it is followed by the inference in lines 11--15 establishing that $b \geq 1$ implies $\lnot(b \leq 0)$.
Finally, line~17 combines these facts to deduce the derived bound, namely $aux \leq 0$.

\section{Additional Analysis of Results}
\label{app:res}

\noindent\begin{minipage}{\textwidth}
	\centering
	\small
	\setlength{\tabcolsep}{4pt}
	\renewcommand{\arraystretch}{1.0}
	
	\begin{tabular}{c cc}
		
		\raisebox{0.8\height}{\rotatebox{90}{\textsc{ACAS Xu}}} &
		\includegraphics[width=0.45\linewidth]{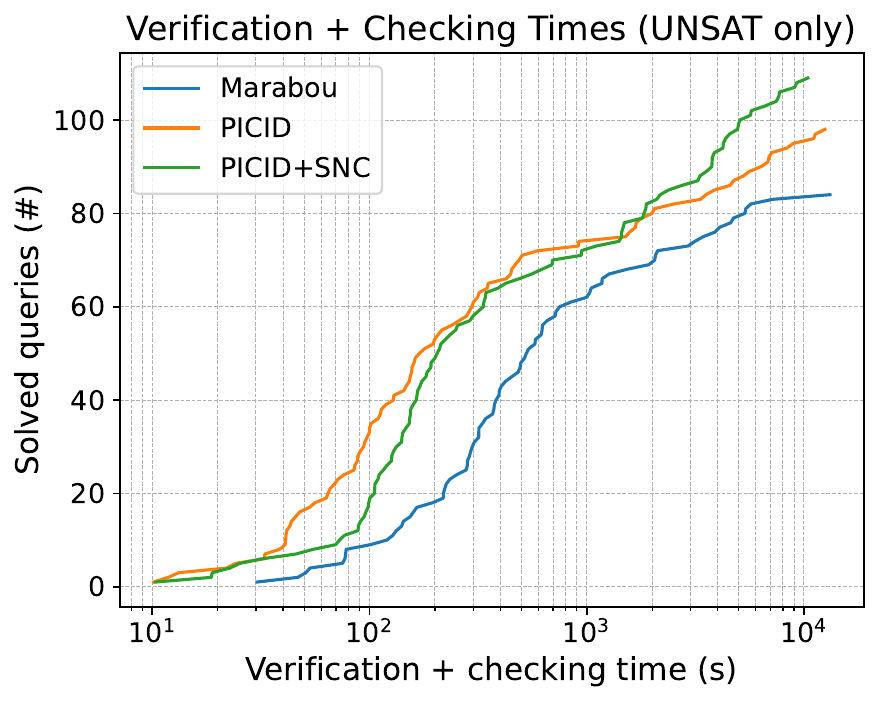} &
		\includegraphics[width=0.45\linewidth]{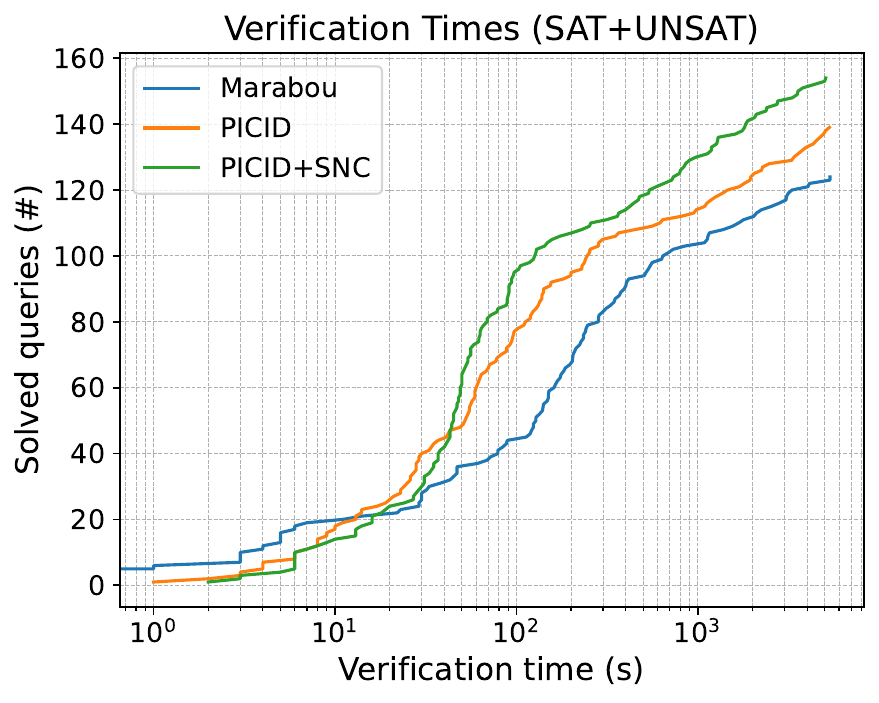} \\
		
		\raisebox{0.8\height}{\rotatebox{90}{\textsc{CERSYVE}}} &
		\includegraphics[width=0.45\linewidth]{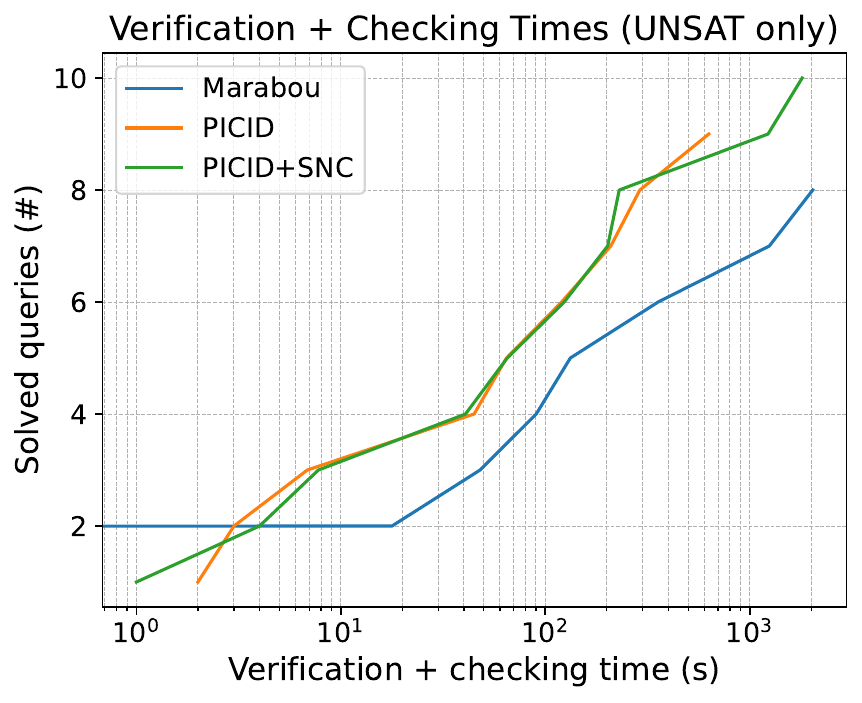} &
		\includegraphics[width=0.45\linewidth]{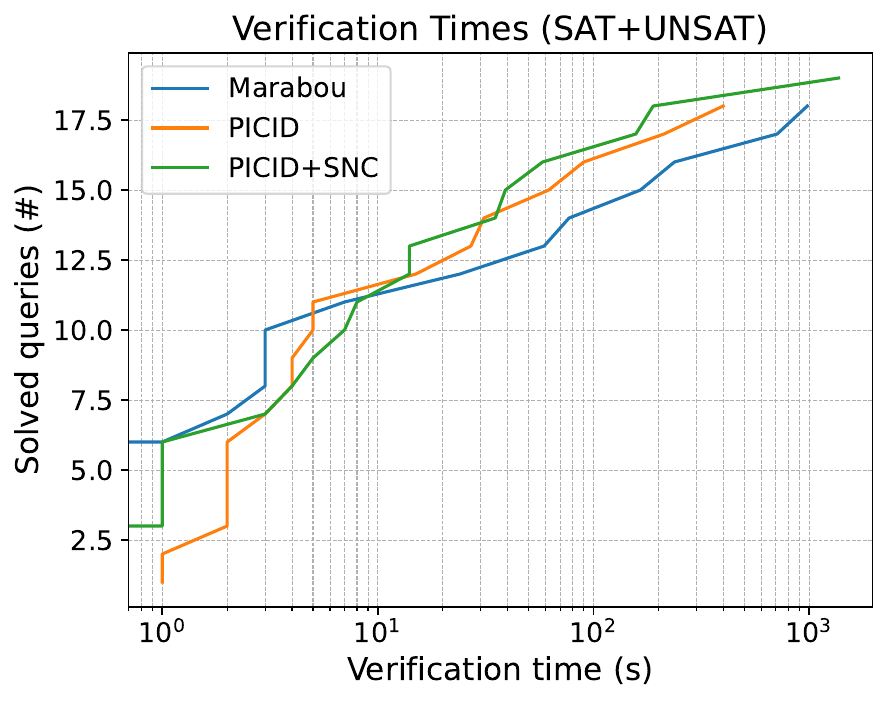} \\
		
		\raisebox{0.8\height}{\rotatebox{90}{\textsc{SafeNLP}}} &
		\includegraphics[width=0.45\linewidth]{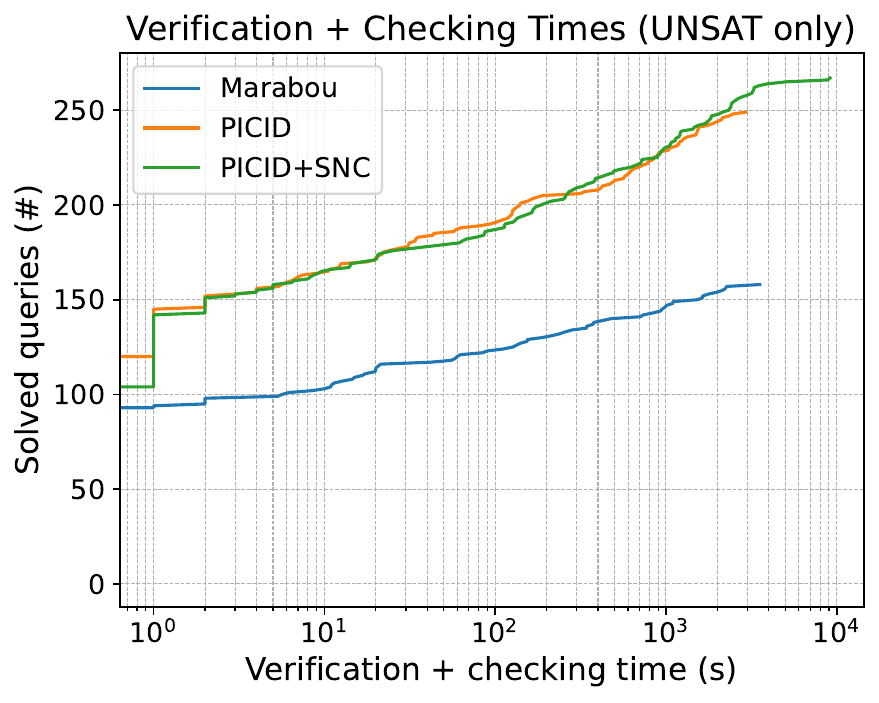} &
		\includegraphics[width=0.45\linewidth]{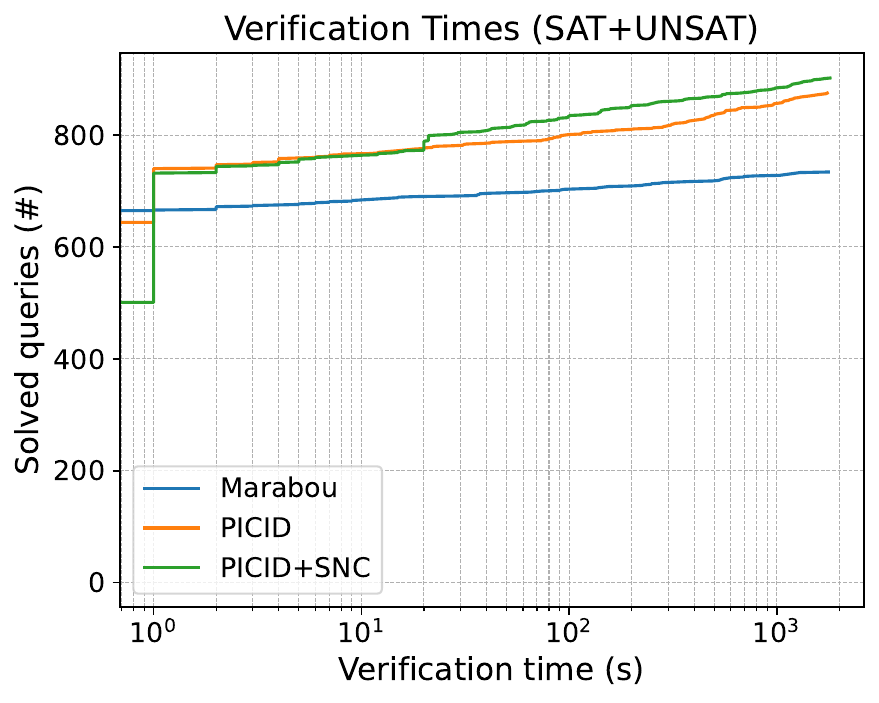} \\
		
		\raisebox{0.8\height}{\rotatebox{90}{\textsc{MNIST\_FC}}} &
		\includegraphics[width=0.45\linewidth]{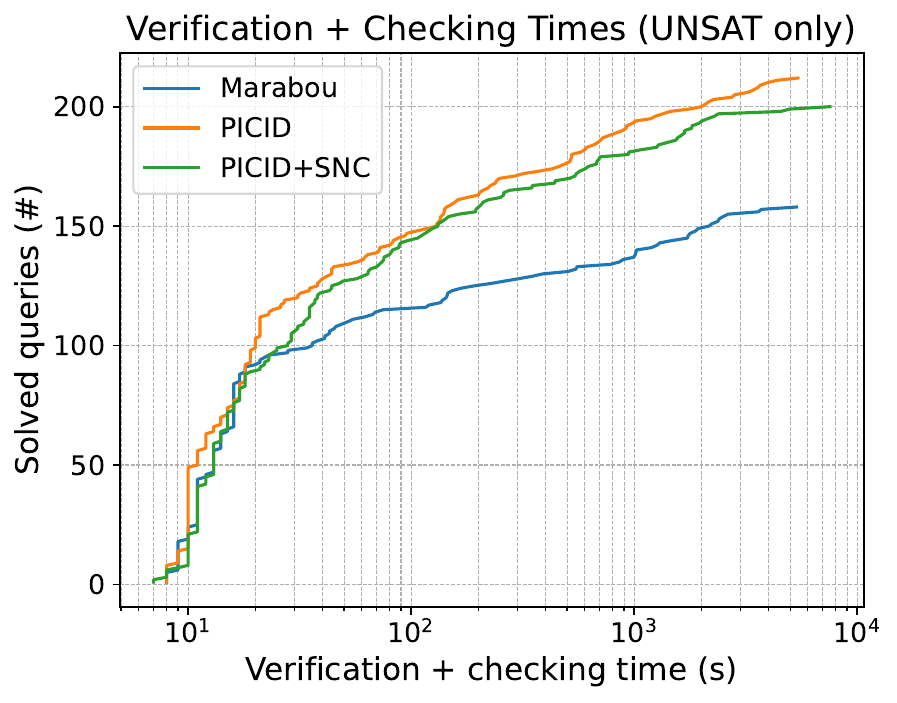} &
		\includegraphics[width=0.45\linewidth]{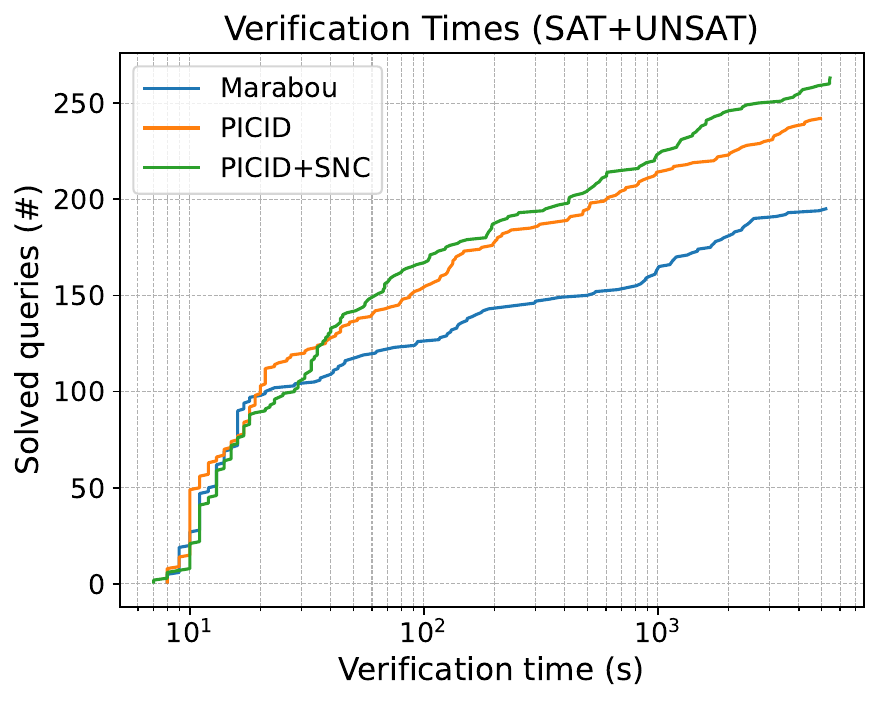} \\
		
	\end{tabular}
	
	\vspace{4pt}
	
	\captionof{figure}{Cactus plots for all benchmarks.
		\textbf{Left column:} cumulative number of solved \unsat{} queries as a function of
		total time (verification + checking).
		\textbf{Right column:} cumulative number of solved queries (\sat{} and \unsat{}) as a
		function of verification time.
		Time measurements are discretized to a one-second resolution.}
	\label{fig:appendix-cactus}
	
\end{minipage}
	
\end{document}